\newtheorem{assumption}{Assumption}
\newcommand{\mycomment}[1]{}
\newcommand{\eproof}{\hfill $\Box$}
\newcommand{\diverge}{\to\infty}
\newcommand{\ones}{\mathbf 1}
\newcommand{\zeros}{\mathbf 0}
\newcommand{\reals}{{\mathbb{R}}}
\newcommand{\eexp}{{\rm e}}
\newcommand{\expect}[1]{\mathbb{E}\left[ #1 \right]}
\newcommand{\toas}{\xrightarrow{{\rm a.s.}}}
\newcommand{\pth}[1]{\left( #1 \right)}
\newcommand{\qth}[1]{\left[ #1 \right]}
\newcommand{\sth}[1]{\left\{ #1 \right\}}
\newcommand{\calC}{{\mathcal{C}}}
\newcommand{\calD}{{\mathcal{D}}}
\newcommand{\calE}{{\mathcal{E}}}
\newcommand{\calF}{{\mathcal{F}}}
\newcommand{\calH}{{\mathcal{H}}}
\newcommand{\calI}{{\mathcal{I}}}
\newcommand{\calL}{{\mathcal{L}}}
\newcommand{\calN}{{\mathcal{N}}}
\newcommand{\calS}{{\mathcal{S}}}
\newcommand{\calV}{{\mathcal{V}}}
\newcommand{\calX}{{\mathcal{X}}}
\newcommand{\conv}{{\sf Conv}}
\begin{document}

\title{Defending Non-Bayesian Learning against Adversarial Attacks
\thanks{This research is supported in part by National Science Foundation award NSF 1421918.
Any opinions, findings, and conclusions or recommendations expressed here are those of the authors
and do not necessarily reflect the views of the funding agencies or the U.S. government.}}

\author{Lili Su \hspace*{1in} Nitin H. Vaidya\\
Department of Electrical and Computer Engineering\\ University of Illinois at Urbana-Champaign\\
\{lilisu3, nhv\}@illinois.edu\\~\\
{\bf Contact author and e-mail}: Lili Su (lilisu3@illinois.edu)
}
\institute{}

\maketitle

~

\centerline{\bf Abstract}

This paper addresses the problem of non-Bayesian learning over multi-agent networks, where agents repeatedly collect partially informative observations about an {\em unknown} state of the world, and try to collaboratively learn the true state. 
We focus on the impact of the adversarial agents on the performance of consensus-based non-Bayesian learning, where non-faulty agents combine local learning updates with consensus primitives.  In particular, we consider the scenario where an unknown subset of agents suffer Byzantine faults -- agents suffering Byzantine faults behave arbitrarily. \\



We propose two learning rules. 
\begin{itemize}
\item In our first update rule, each agent updates its local beliefs as (up to normalization) the product of (1) the likelihood of the {\em cumulative} private signals and (2) the weighted geometric average of the beliefs of its incoming neighbors and itself. Under reasonable assumptions on the underlying network structure and the global identifiability of the network, we show that all the non-faulty agents asymptotically agree on the true state almost surely. For the case when every agent is failure-free, we show that (with high probability) each agent's beliefs on the wrong hypotheses decrease at rate $O(\exp (-Ct^2))$, where $t$ is the number of iterations, and $C$ is a constant. 
\item In general when agents may be adversarial, network identifiability condition specified for the above learning rule scales poorly in the number of state candidates $m$.  In addition, the computation complexity per agent per iteration of this learning rule is forbiddingly high. 
Thus, we propose a modification of our first learning rule, whose complexity per iteration per agent is $O(m^2 n \log n)$, where $n$ is the number of agents in the network.  We show that this modified learning rule works under a much weaker network identifiability condition. In addition, this new condition is independent of $m$. 
 \end{itemize}
~

~

%
%
%

~

\newpage

%
\section{Introduction}
\label{intro}
Decentralized hypothesis testing (learning) has received significant amount of attention \cite{chamberland2003decentralized,gale2003bayesian,jadbabaie2012non,Tsitsiklis1988,tsitsiklis1993decentralized,varshney2012distributed,wong2012stochastic}.
The traditional decentralized detection framework consists of a collection of spatially distributed sensors and a fusion center \cite{Tsitsiklis1988,tsitsiklis1993decentralized,varshney2012distributed}. The sensors independently collect {\em noisy} observations of the environment state, and send only {\em summary} of the private observations to the fusion center, where a final decision is made. In the case when the sensors directly send all the private observations, the detection problem can be solved using a centralized scheme. The above framework does not scale well, since  each sensor needs to be connected to the fusion center and full reliability of the fusion center is required, which may not be practical as the system scales.

Distributed hypothesis testing in the {\em absence} of fusion center is considered in \cite{gale2003bayesian,cattivelli2011distributed,jakovetic2012distributed,bajovic2012large}. In particular, Gale and Kariv \cite{gale2003bayesian} studied the distributed hypothesis testing problem in the context of social learning, where fully Bayesian belief update rule is studied.
Bayesian update rule is impractical in many applications due to memory and computation constraints of each agent.

To avoid the complexity of Bayesian learning,
a non-Bayesian learning framework that combines local Bayesian learning with distributed consensus was proposed by Jadbabaie et al. \cite{jadbabaie2012non}, and has attracted much attention \cite{jadbabaie2013information,nedic2014nonasymptotic,rad2010distributed,shahrampour2013exponentially,Lalitha2014,shahrampour2015finite,shahrampour2014distributed,molavi2015foundations}.
Jadbabaie et al. \cite{jadbabaie2012non} considered the general setting where external signals are observed during each iteration of the algorithm execution. Specifically, the belief of each agent is repeatedly updated as the arithmetic mean of its local Bayesian update and the beliefs of its neighbors -- combining iterative consensus algorithm with local Bayesian update. It is shown \cite{jadbabaie2012non} that, under this learning rule,
each agent learns the true state almost surely. The publication of \cite{jadbabaie2012non} has inspired significant efforts in designing and analyzing non-Bayesian learning rules with a particular focus on refining the fusion strategies and analyzing the (asymptotic and/or finite time) convergence rates of the refined algorithms \cite{jadbabaie2013information,nedic2014nonasymptotic,rad2010distributed,shahrampour2013exponentially,Lalitha2014,shahrampour2015finite,shahrampour2014distributed,molavi2015foundations}.
In this paper we are particularly interested in the log-linear form of the update rule, in which, essentially, each agent updates its belief as the geometric average of the local Bayesian update and its neighbors' beliefs \cite{rad2010distributed,jadbabaie2013information,nedic2014nonasymptotic,shahrampour2013exponentially,Lalitha2014,shahrampour2015finite,shahrampour2014distributed,molavi2015foundations}.
The log-linear form (geometric averaging) update rule is shown to converge exponentially fast \cite{jadbabaie2013information,shahrampour2013exponentially}.
Taking an axiomatic approach, the geometric averaging fusion is proved to be  optimal \cite{molavi2015foundations}.
An optimization-based interpretation of this rule is presented in
\cite{shahrampour2013exponentially}, using dual averaging method with properly chosen proximal functions.
 Finite-time convergence rates are investigated independently in \cite{nedic2014nonasymptotic,Lalitha2014,shahrampour2014distributed}.
 Both \cite{nedic2014nonasymptotic} and \cite{shahrampour2015finite} consider time-varying networks, with slightly different network models. Specifically, \cite{nedic2014nonasymptotic} assumes that the union of every consecutive $B$ networks is strongly connected, while \cite{shahrampour2015finite} considers random networks.
In this paper, we consider static networks for ease of exposition, although we believe that
our results can be easily generalized to time-varying networks.

\vskip 0.5\baselineskip

The prior work implicitly assumes that the networked agents are reliable in the sense that they correctly follow the specified learning rules. However, in some practical multi-agent networks, this assumption may not hold. For example, in social networks, it is possible that some agents are adversarial, and try to prevent the true state from being learned by the good agents.
Thus, this paper focuses on the fault-tolerant version the non-Bayesian framework proposed in \cite{jadbabaie2012non}. In particular, we assume that an unknown subset of agents may suffer Byzantine faults.

An agent suffering Byzantine fault may not follow the pre-specified algorithms/protocols, and {\em misbehave arbitrarily}. For instance, a faulty agent may lie to other agents (possibly non-consistently) about its own estimates.
In addition, a faulty agent is assumed to have a complete knowledge of the system, including the network topology, the local functions of all the non-faulty agents, the algorithm specification of the non-faulty agents, the execution of the algorithm, the local estimates of all the non-faulty agents, and contents of messages the other agents send to each other. Also, the faulty agents can potentially collaborate with each other to prevent the non-faulty agents from achieving their goal.
%
An alternative fault model, where some agents may unexpectedly cease computing and communicate with each other asynchronously, is considered in our companion work \cite{su2016asynchronous}.
The Byzantine fault-tolerance problem was introduced by Pease et al. \cite{PeaseShostakLamport} and has attracted intensive attention from researchers \cite{Dolev:1986:RAA:5925.5931,fekete1990asymptotically,LeBlanc2012,vaidya2012iterative,vaidya2014iterative,Mendes:2013:MAA:2488608.2488657}. Our goal is to design algorithms that enable all the non-faulty agents to learn the underlying true state.

The existing non-Bayesian learning algorithms \cite{jadbabaie2013information,Lalitha2014,molavi2015foundations,nedic2014nonasymptotic,rad2010distributed,shahrampour2013exponentially,shahrampour2014distributed,shahrampour2015finite} are not robust to Byzantine agents, since the malicious messages sent by the Byzantine agents are indiscriminatingly utilized in the local belief updates.
On the other hand, the incorporation of Byzantine consensus is non-trivial, since (i) the {\em effective} communication networks are {\em dependent} on the all the random local observations, making it non-trivial to adapt analysis of previous algorithms to our setting; and (ii) the problem of identifying tight topological condition for reaching Byzantine multi-dimensional consensus iteratively is open, making it challenging to identify the minimal detectability condition on the networked agents to learn the true environmental state.

\paragraph{\bf Contributions:}
Our contributions are two-fold.

\begin{itemize}
\item We first propose an update rule wherein each agent iteratively updates its local beliefs as (up to normalization) the product of (1) the likelihood of the {\em cumulative} private signals and (2) the weighted geometric average of the beliefs of its incoming neighbors and itself (using iterative Byzantine multi-dimensional consensus). In contrast to the existing algorithms \cite{nedic2014nonasymptotic,Lalitha2014}, where only the {\em current} private signal is used in the update, our proposed algorithm relies on the {\em cumulative} private signals.  Under reasonable assumptions on the underlying network structure and the global identifiability of the network, we show that all the non-faulty agents asymptotically agree on the true state almost surely.
In addition, for the special case when every agent is guaranteed to be failure-free, we show that (with high probability) each agent's beliefs on the wrong hypotheses decrease at rate $O(\exp (-Ct^2))$, where $t$ is the number of iterations, and $C$ is a constant. Thus, our proposed rule may be of independent interest for the failure-free setting considered in \cite{jadbabaie2013information,Lalitha2014,molavi2015foundations,nedic2014nonasymptotic,rad2010distributed,shahrampour2013exponentially,shahrampour2014distributed,shahrampour2015finite}.
%
%
%
%
\item  The local computation complexity per agent of the first learning rule is high due to the adoption of multi-dimensional consensus primitives. More importantly, the network identifiability condition used for that learning rule scales poorly in the number of possible states $m$. 
    Thus, we propose a modification of our first learning rule, whose complexity per iteration per agent is $O(m^2 n \log n)$, where $n$ is the number of agents in the network.  We show that this modified learning rule works under a much weaker global identifiability condition, which is independent of $m$. We cast the general $m$--ary hypothesis testing problem into a collection of binary hypothesis testing sub-problems.
\end{itemize}

\paragraph{\bf Outline:}
The rest of the paper is organized as follows. Section \ref{prob formulation} presents the problem formulation.
Section \ref{Bconsensus} briefly reviews existing results on vector Byzantine consensus, and matrix representation of the state evolution. Our first algorithm and its correctness analysis are presented in Section \ref{main results}. Section \ref{failure-free} demonstrates the above learning rule in the special case when $f=0$, and presents a finite-time analysis.
The modified learning rule and its correctness analysis are summarized in Section \ref{modified}. 
Section \ref{sec:conclusion} concludes the paper, and discusses possible extensions.

\section{Problem Formulation}
\label{prob formulation}
\paragraph{Network Model:}
Our network model is similar to the model used in \cite{DBLP:conf/sss/SuV15,vaidya2012iterative}.
We consider a synchronous system. A collection of $n$ agents (also referred as {\em nodes}) are connected by a {\em directed} network $G(\calV, \calE)$, where $\calV=\{1, \ldots, n\}$ and $\calE$ is the collection of {\em directed} edges. For each $i\in \calV$, let $\calI_i$ denote the set of incoming neighbors of agent $i$. In any execution, up to $f$ agents suffer Byzantine faults. For a given execution, let $\calF$ denote the set of Byzantine agents, and $\calN$ denote the set of non-faulty agents. Throughout this paper, we assume that $f$ satisfies the condition implicitly imposed by the given topology conditions mentioned later. We assume that each non-faulty agent knows $f$, but does not know the {\em actual} number of faulty agents $|\calF|$. \footnote{This is because the upper bound $f$ can be learned via long-time performance statistics, whereas, the actual size of $\calF$ varies across executions, and may be impossible to be predicted in some applications.}
Possible misbehavior of faulty agents includes sending incorrect and mismatching (or inconsistent) messages. The Byzantine agents are also assumed to have complete knowledge of system, including the network topology, underlying running algorithm, the states or even the entire history. The faulty agents may collaborate with each other adaptively \cite{Lynch:1996:DA:525656}.
Note that $|\calF|\le f$ and $|\calN|\ge n-f$ since at most $f$ agents may fail.
(As noted earlier, although we assume a static network topology, our results can be easily generalized to time-varying networks.)

Throughout this paper, we use the terms {\em agent} and {\em node} interchangeably.

\paragraph{Observation Model:} Our observation model is identical the model used in \cite{jadbabaie2012non,Lalitha2014,shahrampour2015finite}.
Let $\Theta=\{\theta_1, \theta_2, \ldots, \theta_m\}$ denote a set of $m$ environmental states, which we call {\em hypotheses}.
In the $t$-th iteration, each agent {\em independently} obtains a private signal about the environmental state $\theta^*$, which is initially unknown to every agent in the network.
Each agent $i$ knows the structure of its private signal, which is represented by a collection of parameterized marginal distributions $\calD^i=\{\ell_i(w_i | \theta)| \, \theta\in \Theta,\, w_i\in \calS_i\}$,
where $\ell_i(\cdot | \theta)$ is the distribution of private signal when $\theta$ is the true state, and $\calS_i$ is the finite private signal space. For each $\theta \in \Theta$, and each $i\in \calV$, the support of  $\ell_i(\cdot|\theta)$ is the whole signal space, i.e., $\ell_i(w_i|\theta)>0$, $\forall\, w_i\in \calS_i$ and $\forall\, \theta \in \Theta$.
Let $s_t^i$ be the private signal observed by agent $i$ in iteration $t$, and let ${\bf s}_t=\{s_t^1, s_t^2, \ldots, s_t^n\}$ be the signal profile at time $t$ (i.e., signals observed by the agents in iteration $t$). Given an environmental state $\theta$, the signal profile ${\bf s}_t$ is generated according to the joint distribution $\ell_1(s_t^1|\theta)\times \ell_2(s_t^2|\theta)\times \cdots \times \ell_n(s_t^n|\theta)$.
In addition, let $s^i_{1, t}$ be the signal history up to time $t$ for agent $i=1, \cdots, n$, and let ${\bf s}_{1,t}=\{s_{1,t}^1, s_{1,t}^2, \ldots, s_{1,t}^n\}$ be the signal profile history up to time $t$.

%
%

%
%

\section{Byzantine Consensus}
\label{Bconsensus}
In this section, we briefly review relevant exsting results on Byzantine consensus.
Byzantine consensus has attracted significant amount of attention \cite{Dolev:1986:RAA:5925.5931,fekete1990asymptotically,vaidya2013byzantine,LeBlanc2012,vaidya2012iterative,vaidya2014iterative,Mendes:2013:MAA:2488608.2488657}.
While the past work mostly focus on scalar inputs, the more general vector (or multi-dimensional) inputs have been studied recently \cite{Mendes:2013:MAA:2488608.2488657,vaidya2013byzantine,vaidya2014iterative}.
Complete communication networks are considered in \cite{Mendes:2013:MAA:2488608.2488657,vaidya2013byzantine}, where tight conditions on the number of agents are identified.
Incomplete communication networks are studied in \cite{vaidya2014iterative}. Closer to the non-Bayesian learning problem is the class of {\em iterative approximate Byzantine consensus algorithms}, where each agent is only allowed to exchange information about its state with its neighbors.
In particular, our learning algorithms build upon  {\em Byz-Iter} algorithm proposed in \cite{vaidya2014iterative} and a simple algorithm proposed in \cite{vaidya2012iterative} for iterative Byzantine consensus with vector inputs and scalar inputs, respectively, in incomplete networks.
A matrix representation of the non-faulty agents' states evolution under  {\em Byz-Iter} algorithm is provided by \cite{vaidya2014iterative}, which also captures the dynamics of the simple algorithm with scalar inputs in \cite{vaidya2012iterative}.
%
To make this paper self-contained, in this section, we briefly review the algorithm {\em Byz-Iter} and its matrix representation.

\subsection{Algorithm {\em Byz-Iter} \cite{vaidya2014iterative}}
Algorithm {\em Byz-Iter} is based on Tverberg's Theorem \cite{Tverberg'sTheorem2007}.
\begin{theorem}\cite{Tverberg'sTheorem2007}
\label{TG}
Let $f$ be a nonnegative integer. Let $Y$ be a multiset containing vectors from $\reals^m$ such that $|Y|\ge (m+1)f+1$. There exists a partition $Y_1, Y_2, \cdots, Y_{f+1}$ of $Y$ such that $Y_i$ is nonempty for $1\le i\le f+1$, and the intersection of the convex hulls of $Y_i$'s are nonempty, i.e., $\cap_{i=1}^{f+1}\conv(Y_i)\not=\O$, where $\conv(Y_i)$ is the convex hull of $Y_i$ for $i=1, \cdots, f+1$.
\end{theorem}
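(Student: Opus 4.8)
The plan is to deduce the theorem from the Colorful Carath\'eodory theorem of B\'ar\'any via Sarkaria's tensor trick. Write $r=f+1$ and $d=m$, so the hypothesis reads $|Y|\ge (d+1)(r-1)+1=:N$. It suffices to handle the case $|Y|=N$ exactly: if $|Y|>N$ we partition $N$ of the points of $Y$ as below and place the remaining points into $Y_1$, which only enlarges $\conv(Y_1)$ and keeps every block nonempty. Fix vectors $v_1,\dots,v_r\in\reals^{\,r-1}$ forming the vertices of a regular $(r-1)$-simplex with centroid at the origin, so that $v_1+\dots+v_r=0$ while any $r-1$ of the $v_j$ are linearly independent. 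For $y\in\reals^d$ write $\hat y=(y,1)\in\reals^{d+1}$ for its lift.

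The key construction: to the $N$ points $y_1,\dots,y_N$ I associate the ``color classes'' $C_i=\{\hat y_i\otimes v_j:1\le j\le r\}\subseteq\reals^{(d+1)(r-1)}=\reals^{\,N-1}$, $i=1,\dots,N$, where $\otimes$ denotes the outer product. Since $\frac1r\sum_{j=1}^r\hat y_i\otimes v_j=\hat y_i\otimes\big(\frac1r\sum_{j}v_j\big)=0$, the origin lies in $\conv(C_i)$ for every $i$. We thus have exactly $N$ color classes in the $(N-1)$-dimensional space $\reals^{\,N-1}$, each containing the origin in its convex hull --- precisely the setting of the Colorful Carath\'eodory theorem.

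That theorem yields a rainbow transversal: a map $\sigma:\{1,\dots,N\}\to\{1,\dots,r\}$ and weights $\alpha_i\ge0$ with $\sum_i\alpha_i=1$ such that $\sum_{i=1}^N\alpha_i(\hat y_i\otimes v_{\sigma(i)})=0$. Collecting terms by color and using bilinearity, $\sum_{j=1}^r z_j\otimes v_j=0$ with $z_j:=\sum_{i:\sigma(i)=j}\alpha_i\hat y_i\in\reals^{d+1}$; substituting $v_r=-\sum_{j<r}v_j$ turns this into $\sum_{j=1}^{r-1}(z_j-z_r)\otimes v_j=0$, and linear independence of $v_1,\dots,v_{r-1}$ forces $z_1=\dots=z_r=:z$. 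Writing $z=(p_0,\beta)\in\reals^d\times\reals$: equality of the last coordinates gives $\sum_{i:\sigma(i)=j}\alpha_i=\beta$ for each $j$, and summing over $j$ gives $r\beta=1$, so $\beta=1/r>0$ and in particular each block $Y_j:=\{y_i:\sigma(i)=j\}$ is nonempty. Equality of the first $d$ coordinates gives $\sum_{i:\sigma(i)=j}\alpha_i y_i=p_0$ for each $j$, so with $p:=r\,p_0$ we get $p=\sum_{i:\sigma(i)=j}(r\alpha_i)y_i$ and $\sum_{i:\sigma(i)=j}r\alpha_i=1$; hence $p\in\conv(Y_j)$ for every $j$, and $\bigcap_{j=1}^r\conv(Y_j)\ne\emptyset$ as claimed.

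The one nonelementary input is the Colorful Carath\'eodory theorem, and I expect its proof to be the main obstacle in a self-contained write-up. It follows from an exchange/compactness argument: among all rainbow transversals, pick one whose convex hull lies closest to the origin; if that distance were positive, let $q$ be the nearest point and consider the supporting hyperplane $\{x:\langle x,q\rangle=\langle q,q\rangle\}$. Some vertex $s_k$ of the transversal that appears with positive weight in $q$ lies on this hyperplane; since the origin satisfies $\langle 0,q\rangle=0<\langle q,q\rangle$, the color class of $s_k$ must also contain a point strictly on the origin's side, and swapping that point in strictly decreases the distance to the origin --- contradicting minimality, so the minimal distance is $0$. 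The remaining ingredients above (choosing the $v_j$, the tensor bookkeeping, extracting $p$) are routine linear algebra.
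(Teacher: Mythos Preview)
Your argument is correct: this is the standard Sarkaria tensor proof of Tverberg's theorem via B\'ar\'any's Colorful Carath\'eodory theorem, and the details you give (the lift $\hat y=(y,1)$, the simplex vectors $v_1,\dots,v_r$ summing to zero, the unwinding of $\sum_j z_j\otimes v_j=0$ to conclude $z_1=\cdots=z_r$, and the extraction of the common point $p$) are all in order. The sketch of the Colorful Carath\'eodory theorem by the nearest-rainbow-simplex exchange argument is also the standard one.

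There is, however, nothing to compare against: the paper does \emph{not} prove Theorem~\ref{TG}. It is quoted from \cite{Tverberg'sTheorem2007} as a black-box tool, used only to justify that the Tverberg points computed in Algorithm~\ref{BconsensusA} exist. So your write-up goes well beyond what the paper does here. If the goal is to supply a self-contained proof where the paper merely cites, your proposal is a fine choice; note only that the cited reference (Perles--Sigron) gives a different, more geometric argument that in fact proves a generalization, whereas your route is the Sarkaria--B\'ar\'any linear-algebraic one.
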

The proper partition in Theorem \ref{TG}, and the points in $\cap_{i=1}^{f+1} \conv(Y_i)$, are referred as {\em Tverberg partition of $Y$} and {\em Tverberg points of $Y$}, respectively.


For convenience of presenting our algorithm in Section \ref{main results},
we present {\em Byz-Iter} (described in Algorithm \ref{BconsensusAA}) below using {\em One-Iter} (described in Algorithm \ref{BconsensusA}) as a primitive. 
The parameter ${\bf x}^i$ passed to {\em One-Iter} at agent $i$, and ${\bf y}^i$ returned
by {\em One-Iter} are both $m$-dimensional vectors.
Let ${\bf v}^i$ be the state of agent $i$ that will be iteratively updated, with ${\bf v}_t^i$ being the state at the end of iteration $t$ and ${\bf v}_0^i$ being the input of agent $i$. In each iteration $t\ge 1$, a non-faulty agent performs the steps in{\em One-Iter}. In particular, in the message receiving step, if a message is not received from some neighbor, that neighbor must be faulty, as the system is synchronous.  In this case, the missing message values are set to some default value. Faulty agents may deviate from the algorithm specification arbitrarily.
In {\em Byz-Iter}, the value returned by {\em One-Iter} at agent $i$ is assigned to ${\bf v}_t^i$.
\begin{algorithm}
\caption{Algorithm {\em One-Iter} ~~with input ${\bf x}^i$ at agent $i$}
\label{BconsensusA}
\vskip 0.2\baselineskip
{\normalsize
$Z^i\gets \O$\;

Transmit ${\bf x}^i$ on all outgoing links\;
\vskip 0.2\baselineskip

Receive messages on all incoming links. {\small \color{OliveGreen}\% These message values form a multiset $R^i$ of size $|\calI_i|$.\%}
\vskip 0.2\baselineskip

\For {every $C\subseteq R^i\cup \{{\bf x}^i\}$ such that $|C|=(m+1)f+1$}
{add to $Z^i$ a {\em Tverberg point} of multiset $C$}
\vskip 0.2\baselineskip

Compute ${{\bf y}^i}$ as follows:~~ ${\bf y}^i\gets \frac{1}{1+|Z^i|} \pth{{\bf x}^i+\sum_{{\bf z}\in Z^i}{\bf z}}$\;
\vskip 0.2\baselineskip
Return ${\bf y}^i$\;
}
\end{algorithm}

\begin{algorithm}

\caption{Algorithm {\em Byz-Iter}~ \cite{vaidya2014iterative}:  ~~ $t$-th iteration at agent $i$
}
\label{BconsensusAA}
{\normalsize
\vskip 0.2\baselineskip
${\bf v}_t^i \gets$  {\em One-Iter}(${\bf v}_{t-1}^i$)\;
\vskip 0.2\baselineskip
}
\end{algorithm}

\begin{remark}
Note that for each agent $i\in \calN$, the computation complexity per iteration is
\begin{align*}
\Omega \pth{\binom{|R^i\cup \{{\bf x}^i\}|}{(m+1)f+1}} = \Omega \pth{\binom{|\calI_i|+1}{(m+1)f+1}}. 
\end{align*}
In the worst case, $||\calI_i|+1|=n$, and 
 \begin{align*}
  \Omega \pth{\binom{|\calI_i|+1}{(m+1)f+1}}=\Omega \pth{\binom{n}{(m+1)f+1}}= \Omega \pth{\pth{\frac{n}{\eexp}}^{(m+1)f+1}}.
\end{align*}
Since our first learning rule is based on Algorithm {\em Byz-Iter}, the computation complexity of our first proposed algorithm is also high. Nevertheless, our first learning rule contains our main algorithmic ideas. More importantly, this learning rule can be modified such that the computation complexity per iteration per agent is $O(m^2 n\log n)$. Specifically, the modified learning rule adopts the scalar Byzantine consensus instead of the $m$--dimensional consensus. This modified learning rule is optimal in the sense that it works under minimal network identifiability requirements.
\end{remark}

\subsection{Correctness of Algorithm {\em Byz-Iter}}

We briefly summarize the aspects of correctness proof of Algorithm \ref{BconsensusAA} from \cite{vaidya2014iterative} that are necessary for our subsequent discussion.
By using the Tverberg points in the update of ${\bf v}_t^i$ above, effectively, the extreme message values (that may potentially be sent by faulty agents) are trimmed away. Informally speaking, trimming certain messages can be viewed as
ignoring (or removing) incoming links that carry the outliers. \cite{vaidya2014iterative}
shows that the effective communication network thus obtained can be characterized by a ``reduced graph'' of $G(\calV, \calE)$, defined below. It is important to note that the non-faulty agents {\bf do not} know the identity of the faulty agents. 
\begin{definition}[$m$--dimensional reduced graph]
\label{reduced graph}
An $m$--dimensional reduced graph $\calH(\calN, \calE_{\calF})$ of $G(\calV, \calE)$ is obtained by (i) removing all faulty nodes $\calF$, and all the links incident on the faulty nodes $\calF$; and (ii) for each non-faulty node (nodes in $\calN$), removing  up to $mf$ additional incoming links.
\end{definition}
\begin{definition}
\label{source}
A source component in any given $m$--dimensional reduced graph is a strongly connected component (of that reduced graph),
which does not have any incoming links from outside that component.
\end{definition}
It turns out that the effective communication network is potentially time-varying (partly) due to time-varying
behavior of faulty nodes.
Assumption \ref{sufficient} below states a condition that is sufficient for reaching approximate Byzantine vector consensus using Algorithm \ref{BconsensusA} \cite{vaidya2014iterative}.
\begin{assumption}
\label{sufficient}
Every $m$--dimensional reduced graph of $G(\calV, \calE)$ contains a unique source component.
\end{assumption}
Let $\calC_m$ be the set of all the $m$--dimensional reduced graph of $G(\calV, \calE)$.
Define $\chi_m \triangleq  |\calC_m| $.
Since $G(\calV, \calE)$ is finite,  we have $\chi_m<\infty$. Let $\calH_m \in \calC_m$ be an $m$--dimensional reduced graph of $G(\calV, \calE)$ with source component $\calS_{\calH_m}$. Define
\begin{align}
\label{minimal source}
\gamma_m\triangleq \min_{\calH_m\in \calC_m} |\calS_{\calH_m}|,
\end{align}
i.e., $\gamma_m$ is the minimum source component size among all the $m$--dimensional reduced graphs. Note that $\gamma_m\ge 1$ if Assumption \ref{sufficient} holds for a given $m$.
\begin{theorem}\cite{vaidya2014iterative}
\label{correctness of Byz-Iter}
Suppose Assumption \ref{sufficient} holds for a given $m\ge 1$. Under Algorithm {\em Byz-Iter}, all the non-faulty agents (agents in $\calN$) reach consensus asymptotically, i.e.,
$\lim_{t\diverge} |{\bf v}_t^i-{\bf v}_t^j| =0, \forall \, i,j\in \calN. $
\end{theorem}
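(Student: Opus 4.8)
The plan is to recast the evolution of the non-faulty agents' states as a product of time-varying row-stochastic matrices, and then show that this product drives the disagreement among non-faulty agents to zero. Fix a coordinate $k\in\{1,\dots,m\}$ and write $v_t^i[k]$ for the $k$-th entry of $\mathbf{v}_t^i$. In \emph{One-Iter}, every Tverberg point of a multiset $C$ lies in $\conv(C)$, so the vector $\mathbf{y}^i$ returned by agent $i$ is a convex combination of the values $\{\mathbf{x}^j:j\in\calI_i\cup\{i\}\}$ it received, with convex coefficients that do not depend on the coordinate $k$. The first genuinely nontrivial step is to argue, via Tverberg's Theorem (Theorem~\ref{TG}), that the contributions of the (at most $f$) faulty senders to any non-faulty agent's update can be re-expressed through the states of \emph{non-faulty} agents only — this is the content of the ``trimming'' interpretation. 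This yields, for each $t$, a row-stochastic matrix $M_t\in\reals^{\calN\times\calN}$ and an $m$--dimensional reduced graph $\calH[t]\in\calC_m$ such that $\mathbf{v}_t^i=\sum_{j\in\calN}M_t[i,j]\,\mathbf{v}_{t-1}^j$ for every $i\in\calN$, where $M_t[i,j]=0$ whenever $j\neq i$ and $(j,i)$ is not an edge of $\calH[t]$; crucially the same $M_t$ governs all $m$ coordinates.

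Next I would record the quantitative structure of the $M_t$. Since $|Z^i|\le\binom{n}{(m+1)f+1}$ is bounded and Tverberg points can be chosen canonically, there is a constant $\beta=\beta(n,m,f)>0$ such that $M_t[i,i]\ge\beta$ for all $i\in\calN$ and all $t$ (each agent retains its own value with uniformly positive weight), and $M_t[i,j]\ge\beta$ for the incoming edges $(j,i)$ of $\calH[t]$ that are actually exercised. Combined with the finiteness $\chi_m=|\calC_m|<\infty$ and Assumption~\ref{sufficient} — every $\calH\in\calC_m$ has a \emph{unique} source component $\calS_{\calH}$, and every node of $\calS_{\calH}$ reaches every node of $\calN$ by a directed path in $\calH$ — this gives a finite ``alphabet'' of well-behaved transition matrices out of which the infinite product $M_tM_{t-1}\cdots M_1$ is built.

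The core of the argument is a block-contraction estimate for the span seminorm $\psi(\mathbf{u}):=\max_{i\in\calN}u_i-\min_{i\in\calN}u_i$ on $\reals^{\calN}$. I would show that there is a finite $\nu=\nu(n,m,f)$ and $\delta:=\beta^{\nu}>0$ such that for every $t$ the product $P_t:=M_{t+\nu}M_{t+\nu-1}\cdots M_{t+1}$ has at least one column all of whose entries are $\ge\delta$; equivalently, within any $\nu$ consecutive iterations some non-faulty agent's state influences, with uniformly positive weight, the state of \emph{every} non-faulty agent. Granting this, a routine computation gives $\psi(P_t\mathbf{u})\le(1-\delta)\,\psi(\mathbf{u})$, so that $\max_{i\in\calN}v_t^i[k]-\min_{i\in\calN}v_t^i[k]\le(1-\delta)^{\lfloor t/\nu\rfloor}\pth{\max_{i\in\calN}v_0^i[k]-\min_{i\in\calN}v_0^i[k]}\to 0$ as $t\diverge$, for each $k$; this is exactly $\lim_{t\diverge}|\mathbf{v}_t^i-\mathbf{v}_t^j|=0$ for all $i,j\in\calN$ (in fact geometrically fast, slightly stronger than the stated conclusion).

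The step I expect to be the main obstacle is the combinatorial claim that $P_t$ has a uniformly positive column. The difficulty is that the active reduced graph $\calH[t]$, hence the identity of its source component, changes from iteration to iteration, so one cannot simply invoke a single globally reachable node common to all $M_t$ (alternating ``leader groups'' is the standard failure mode). Here Assumption~\ref{sufficient} is essential: uniqueness of the source component in each reduced graph forbids two mutually unreachable leader groups within a single snapshot, and I would combine this with a pigeonhole argument over the finitely many graphs in $\calC_m$ and a propagation-of-influence induction — the positive diagonals guarantee that the set of agents already influenced by a fixed agent is monotone nondecreasing in time and strictly grows until it saturates — to extract an admissible $\nu$. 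The remaining technical point, the uniform lower bound $\beta$ on the relevant Tverberg/averaging coefficients, is then straightforward because all multisets involved have size bounded in terms of $n,m,f$.
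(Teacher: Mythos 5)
Your proposal is correct and follows essentially the same route as the proof the paper relies on (deferred to \cite{vaidya2014iterative} and summarized in Lemma~\ref{matrix lemma}, Theorem~\ref{convergencerate}, and Lemma~\ref{lblimiting}): Tverberg points yield a row-stochastic matrix representation ${\bf A}[t]\ge\beta_m{\bf H}_m[t]$ over a reduced graph, and the backward products contract the span seminorm via a coefficients-of-ergodicity / scrambling argument with window $\nu=\chi_m(n-\phi)$. The only minor imprecision is your claim that $M_t[i,j]=0$ off the edges of $\calH[t]$ --- the representation only guarantees a lower bound $\beta$ on the reduced-graph edges, not vanishing elsewhere --- but the contraction argument uses only the lower bound, so nothing breaks.
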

The proof of Theorem \ref{correctness of Byz-Iter} relies crucially on a matrix representation of the state evolution. 

\subsection{Matrix Representation \cite{vaidya2014iterative}}
\label{mr}
Let $|\calF|=\phi$ (thus, $0\leq\phi\leq f$). Without loss of generality, assume that agents $1$ through $n-\phi$ are non-faulty, and agents $n-\phi+1$ to $n$ are Byzantine.

\begin{lemma}\cite{vaidya2014iterative}
\label{matrix lemma}
Suppose Assumption \ref{sufficient} holds for a given $m\ge 1$. The state updates performed by the non-faulty agents in the $t$--th iteration ($t\ge 1$) can be expressed as 
\begin{align}
\label{evo matrix}
{\bf v}_t^i=\sum_{j=1}^{n-\phi}{\bf A}_{ij}[t]{\bf v}^j_{t-1},
\end{align}
where ${\bf A}[t]\in \reals^{(n-\phi)\times (n-\phi)}$ is a {\em row stochastic} matrix for which there exists an $m$--dimensional reduced graph $\calH_m[t]$ with adjacency matrix ${\bf H}_m[t]$ such that $ {\bf A}[t]\ge \beta_m {\bf H}_m[t]$, where $0<\beta_m\le 1$ is a constant that depends only on $G(\calV, \calE)$.
\end{lemma}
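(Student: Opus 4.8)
This statement is Lemma~3 of \cite{vaidya2014iterative}; the plan is to recover it from the description of \emph{One-Iter} together with Tverberg's theorem (Theorem~\ref{TG}). The claim has two parts: (i) the one‑step evolution of the non‑faulty agents' states is an \emph{exact} linear recursion ${\bf v}_t^i=\sum_{j=1}^{n-\phi}{\bf A}_{ij}[t]{\bf v}_{t-1}^j$ with ${\bf A}[t]$ row stochastic and with zero weight on the faulty columns; and (ii) this stochastic matrix entrywise dominates $\beta_m{\bf H}_m[t]$, where ${\bf H}_m[t]$ is the adjacency matrix of some $m$--dimensional reduced graph and $\beta_m>0$ depends only on $G(\calV,\calE)$. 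I would establish (i) by a direct unwinding of \emph{One-Iter} plus a pigeonhole argument on Tverberg partitions, and (ii) by bookkeeping which ``clean'' values can appear in the convex representation and matching the budget of $mf$ droppable incoming links to the slack in Tverberg's bound $|Y|\ge(m+1)f+1$.

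\emph{Part (i).} Fix a non‑faulty $i$ and an iteration $t$. In \emph{One-Iter}, a non‑faulty incoming neighbor $j\in\calI_i\cap\calN$ transmits exactly ${\bf x}^j={\bf v}_{t-1}^j$, while a faulty neighbor may transmit anything (a missing message is replaced by a default value and its sender treated as faulty). Take any $C\subseteq R^i\cup\{{\bf x}^i\}$ with $|C|=(m+1)f+1$. At most $f$ of the values in $C$ come from faulty agents, so in the Tverberg partition $C=Y_1\cup\cdots\cup Y_{f+1}$ of Theorem~\ref{TG} at least one part, say $Y_k$, consists entirely of values contributed by non‑faulty agents (including possibly ${\bf x}^i$ itself). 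Since the Tverberg point added to $Z^i$ for $C$ lies in $\conv(Y_k)\subseteq\conv\{{\bf v}_{t-1}^j:j\in\calN\}$, fixing one convex representation of it (e.g.\ via Carath\'eodory inside $Y_k$, using at most $m+1$ of the values) gives nonnegative coefficients summing to one and supported on $\calN$. Because ${\bf y}^i$ is the equal‑weight average of ${\bf x}^i={\bf v}_{t-1}^i$ and the $|Z^i|$ Tverberg points, it is a convex combination of $\{{\bf v}_{t-1}^j:j\in\calN\}$; collecting coefficients defines the row ${\bf A}_{i\cdot}[t]$, which is nonnegative, sums to one, and vanishes on faulty columns. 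Moreover ${\bf x}^i$ alone contributes at least $\bigl(1+|Z^i|\bigr)^{-1}\ge\bigl(1+\binom{n}{(m+1)f+1}\bigr)^{-1}$ to ${\bf v}_{t-1}^i$, so the diagonal of ${\bf A}[t]$ is bounded below by a $G$‑only constant.

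\emph{Part (ii).} It remains to produce, at each non‑faulty $i$, an incoming‑edge set obtained by deleting all links from faulty agents plus at most $mf$ further links, such that every retained non‑faulty in‑neighbor (and $i$ itself) carries ${\bf A}$--weight at least a uniform $\beta_m>0$; the union of these choices over $i\in\calN$ is a reduced graph $\calH_m[t]$, and then ${\bf A}[t]\ge\beta_m{\bf H}_m[t]$ holds entrywise. The zero pattern from Part~(i) already matches the removal of faulty nodes, so the content is the $mf$‑bound and the constant. For $i$ with enough incoming links (the low‑in‑degree cases collapse to ${\bf y}^i={\bf v}_{t-1}^i$ and are handled directly, as in \cite{vaidya2014iterative}), each $(m+1)f+1$--subset leaves at least $mf+1$ clean values, so every Tverberg point is already confined to the convex hull of ``all but at most $mf$'' of the clean values available to $i$; organizing the per‑subset representations consistently, all but at most $mf$ non‑faulty in‑neighbors can be kept with a coefficient bounded below by the reciprocal of $(1+|Z^i|)$ times the Carath\'eodory size $m+1$. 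Bounding $|Z^i|\le\binom{n}{(m+1)f+1}$ then yields a single $\beta_m$ depending only on $G(\calV,\calE)$. This is precisely the computation carried out in \cite{vaidya2014iterative}, which I would cite for the remaining details.

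\emph{Main obstacle.} Part~(i) is routine once Theorem~\ref{TG} is invoked. The real difficulty is the quantitative half of Part~(ii): showing that, \emph{uniformly over the time‑varying and fully adversarial} faulty messages, the convex representation of each ${\bf y}^i$ can be chosen so that the ``effective in‑degree deficit'' at every non‑faulty node is at most $mf$ \emph{and} every retained in‑neighbor receives weight at least a graph‑dependent constant. This is exactly where the interplay between the Tverberg threshold $(m+1)f+1$ and the reduced‑graph budget $mf$ is used, and it is the step I would either carry out with the most care or simply import verbatim from \cite{vaidya2014iterative}.
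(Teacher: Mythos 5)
The paper itself offers no proof of Lemma~\ref{matrix lemma}: it is imported wholesale from \cite{vaidya2014iterative}, so there is no in-paper argument to compare yours against, and your proposal is a reconstruction of the external proof. Your Part~(i) is correct and is the standard argument: any $C$ of size $(m+1)f+1$ contains at most $f$ faulty values, so among the $f+1$ nonempty cells of the Tverberg partition some cell $Y_k$ is clean, the Tverberg point lies in $\conv(Y_k)\subseteq\conv\{{\bf v}_{t-1}^j: j\in\calN\cap(\calI_i\cup\{i\})\}$, and averaging with ${\bf x}^i$ gives a convex combination supported on non-faulty indices whose diagonal entry is at least $(1+|Z^i|)^{-1}\ge\bigl(1+\binom{n}{(m+1)f+1}\bigr)^{-1}$, a constant depending only on $G(\calV,\calE)$.

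Part~(ii), however, contains a concrete error beyond merely being deferred to the reference. You claim that because each $(m+1)f+1$--subset $C$ contains at least $mf+1$ clean values, every Tverberg point is confined to the convex hull of all but at most $mf$ of the clean values available to $i$. This conflates two different counts: $C$ contains at least $mf+1$ clean values, but $C$ may omit far more than $mf$ of the clean values in $R^i\cup\{{\bf x}^i\}$ (its size $(m+1)f+1$ can be much smaller than $|\calI_i|+1$), and in any case the Tverberg point lies in the convex hull of a single clean cell $Y_k$, which may be a singleton. So an individual Tverberg point gives no control over how many clean in-neighbors receive positive weight, and summing over all $C$ does not obviously help: nothing in your sketch rules out the entire weight of ${\bf y}^i$ (beyond the diagonal) concentrating on a handful of clean neighbors, which would leave more than $mf$ retained in-neighbors with weight below any uniform $\beta_m$ and break the domination ${\bf A}[t]\ge\beta_m{\bf H}_m[t]$. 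Certifying the $mf$-link budget together with a uniform per-retained-link lower bound is the entire content of the lemma beyond Part~(i), and the accounting in \cite{vaidya2014iterative} is genuinely different from what you describe. You do flag this as the step you would import verbatim, so the gap is acknowledged rather than hidden, but the mechanism you propose for it would not close it.
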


Let ${\bf \Phi}(t,r) \triangleq  {\bf A}[t]\cdots {\bf A}[r]$ for $1\le r\le t+1$. By convention, ${\bf \Phi}(t,t)={\bf A}[t]$ and ${\bf \Phi} (t, t+1)={\bf I}$.
Note that ${\bf \Phi}(t,r)$ is a backward product. Using prior work
on coefficients of ergodicity \cite{Hajnal58}, under Assumption \ref{sufficient}, it has been shown \cite{vaidya2014iterative,wolfowitz1963products} that
\begin{align}
\label{mixing}
\lim_{t\ge r,~ t\diverge}{\bf \Phi}(t, r)=\ones {\bf \pi}(r),
\end{align}
where ${\bf \pi}(r)\in \reals^{n-\phi}$ is a row stochastic vector, and $\ones$ is the column vector with each entry being $1$.
Recall that $\chi_m$ is the total number of $m$--dimensional reduced graphs of $G(\calV, \calE)$, and $\beta_m$ is defined in Lemma \ref{matrix lemma},
and $\phi \triangleq  |\calF|$.
The convergence rate in \eqref{mixing} is exponential.
\begin{theorem}\cite{vaidya2014iterative}
\label{convergencerate}
For all $t\ge r\ge 1$, it holds that
$\left | {\bf \Phi}_{ij}(t, r)-\pi_j(r)\right |\le (1-\beta_m^\nu)^{\lceil\frac{t-r+1}{\nu}\rceil},$
where $\nu \triangleq  \chi_m(n-\phi)$.
\end{theorem}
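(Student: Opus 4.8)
The plan is to bound the scrambling behaviour of the backward product $\mathbf{\Phi}(t,r)$ in blocks of length $\nu = \chi_m(n-\phi)$, using the standard coefficient-of-ergodicity machinery. Recall that for a row-stochastic matrix $M$ one defines the ergodicity coefficient
\begin{align*}
\lambda(M) \;\triangleq\; 1 - \min_{i,k}\sum_{\ell}\min\{M_{i\ell}, M_{k\ell}\},
\end{align*}
which is submultiplicative along products, $\lambda(M_1 M_2)\le \lambda(M_1)\lambda(M_2)$, and which controls the spread of rows: $|\mathbf{\Phi}_{ij}(t,r) - \pi_j(r)| \le \lambda(\mathbf{\Phi}(t,r))$ once the limit $\ones\pi(r)$ from \eqref{mixing} is invoked (the rows converge, so the limiting value $\pi_j(r)$ lies between the min and max of column $j$ at every finite time). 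So it suffices to show $\lambda(\mathbf{\Phi}(t,r)) \le (1-\beta_m^\nu)^{\lceil (t-r+1)/\nu\rceil}$.

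Next I would handle a single block of $\nu$ consecutive iterations. By Lemma \ref{matrix lemma}, each $\mathbf{A}[\tau]$ dominates $\beta_m \mathbf{H}_m[\tau]$ for some $m$-dimensional reduced graph $\mathcal{H}_m[\tau]\in\mathcal{C}_m$, and by Assumption \ref{sufficient} every such reduced graph has a unique source component. The key combinatorial fact — which is exactly where the factor $\chi_m(n-\phi)$ enters — is that the backward product of any $\nu = \chi_m(n-\phi)$ consecutive adjacency matrices $\mathbf{H}_m[\tau]$ has at least one strictly positive column: over any window of $\chi_m(n-\phi)$ steps, since there are only $\chi_m$ distinct reduced graphs, some reduced graph recurs, and more to the point, the unique source component in each graph reaches every node within $n-\phi$ hops, so after $n-\phi$ steps through graphs that all share the ``unique source'' property there is a node from which every non-faulty node is reachable along the time-varying path; chaining this over the $\chi_m$ possible graph-types forces a common positive column. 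Consequently $\mathbf{\Phi}(\tau+\nu-1,\tau) \ge \beta_m^{\nu}\,(\text{product of the }\mathbf{H}_m\text{'s}) $ has a column bounded below by $\beta_m^{\nu}$, which gives $\lambda(\mathbf{\Phi}(\tau+\nu-1,\tau)) \le 1-\beta_m^{\nu}$.

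Finally I would split the interval $[r,t]$ into $\lceil (t-r+1)/\nu\rceil$ consecutive blocks, each of length $\nu$ except possibly the last (and a short trailing block only makes $\lambda$ smaller, so the bound is unaffected after padding), apply submultiplicativity of $\lambda$ across the blocks, and conclude
\begin{align*}
|\mathbf{\Phi}_{ij}(t,r) - \pi_j(r)| \;\le\; \lambda(\mathbf{\Phi}(t,r)) \;\le\; \prod_{\text{blocks}} \lambda(\text{block}) \;\le\; (1-\beta_m^{\nu})^{\lceil (t-r+1)/\nu\rceil}.
\end{align*}
The main obstacle is the combinatorial step: proving that $\nu = \chi_m(n-\phi)$ iterations genuinely suffice to produce a positive column in the product of reduced-graph adjacency matrices, despite the adversarial, time-varying choice of which reduced graph appears at each step. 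This requires carefully exploiting that Assumption \ref{sufficient} guarantees not just strong connectivity of a source component but its uniqueness, so that the source components of two consecutive reduced graphs must overlap (or at least be linked), preventing the adversary from ``resetting'' the information flow; bounding the reach time then gives the $(n-\phi)$ factor and the $\chi_m$ factor covers the worst-case cycling through all reduced-graph types. Everything else — the properties of $\lambda$, the reduction to bounding $\lambda(\mathbf{\Phi})$, and the block decomposition — is routine and can cite \cite{Hajnal58,wolfowitz1963products}.
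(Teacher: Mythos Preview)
The paper does not supply its own proof of this theorem; it is quoted verbatim from \cite{vaidya2014iterative}, with the surrounding text pointing to coefficients of ergodicity \cite{Hajnal58,wolfowitz1963products} as the underlying machinery. Your sketch via the Dobrushin--Hajnal coefficient $\lambda(\cdot)$, submultiplicativity, and a pigeonhole/block argument over windows of length $\nu=\chi_m(n-\phi)$ is exactly the route those references take, so in substance you are aligned with what the paper invokes.

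One concrete gap, however: your treatment of the trailing partial block is wrong. You write that ``a short trailing block only makes $\lambda$ smaller,'' but a block of fewer than $\nu$ matrices need not be scrambling at all; for such a block one only has the trivial bound $\lambda\le 1$. Consequently the straightforward block decomposition delivers the exponent $\lfloor (t-r+1)/\nu\rfloor$, not $\lceil (t-r+1)/\nu\rceil$. To recover the ceiling as stated you would need an additional argument (or to accept that the cited statement may be using a slightly loose exponent; in all downstream uses in this paper only geometric decay in $t-r$ is needed, so the floor/ceiling distinction is immaterial for the learning results). Either tighten this step or note explicitly that the block argument yields the floor and that this suffices for every application in Sections~\ref{main results}--\ref{modified}.

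A second, softer point: your combinatorial paragraph for why a window of length $\nu$ forces a positive column is in the right spirit but is not yet a proof. The clean version uses (i) positivity of the diagonal of each $\mathbf{A}[t]$ (self-loops, implicit in Lemma~\ref{matrix lemma}), (ii) pigeonhole to find one reduced graph $\mathcal{H}$ occurring at least $n-\phi$ times in the window, and (iii) the fact that the source component of $\mathcal{H}$ reaches every non-faulty node in at most $n-\phi-1$ hops; self-loops let the non-$\mathcal{H}$ steps be ``skipped,'' so some node in $\mathcal{S}_{\mathcal{H}}$ indexes a column of $\mathbf{\Phi}$ bounded below by $\beta_m^{\nu}$. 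You gesture at this but conflate recurrence with consecutiveness; making the self-loop/waiting step explicit closes the argument.
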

Recall that $\gamma_m$ is defined in (\ref{minimal source}).
The next lemma is a consequence of the results in  \cite{vaidya2014iterative}.
\begin{lemma}\cite{vaidya2014iterative}
\label{lblimiting}
For any $r\ge 1$, there exists a reduced graph $\calH [r]$ with source component $\calS_r$ such that $\pi_i(r)\ge \beta_m^{\chi_m(n-\phi)}$ for each $i\in \calS_r$. In addition, $|\calS_r|\ge \gamma_m$.
\end{lemma}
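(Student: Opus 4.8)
The plan is to combine the splitting identity for the backward products with a pigeonhole argument over the finitely many reduced graphs. First I would record that for every $t\ge r+\nu$, where $\nu=\chi_m(n-\phi)$ as in Theorem~\ref{convergencerate}, the backward product factors as $\mathbf{\Phi}(t,r)=\mathbf{\Phi}(t,r+\nu)\,\mathbf{\Phi}(r+\nu-1,r)$; letting $t\to\infty$ and invoking \eqref{mixing} on both sides yields $\pi(r)=\pi(r+\nu)\,\mathbf{\Phi}(r+\nu-1,r)$. Since $\pi(r+\nu)$ is itself a row stochastic vector, this reduces the lemma essentially to a lower bound on the entries of the single length-$\nu$ window product $\mathbf{\Phi}(r+\nu-1,r)$.

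Second I would strip off the edge weights. By Lemma~\ref{matrix lemma}, $\mathbf{A}[s]\ge\beta_m\mathbf{H}_m[s]\ge\mathbf{0}$ for each $s$, and since the product of entrywise nonnegative matrices is monotone in each factor, $\mathbf{\Phi}(r+\nu-1,r)\ge\beta_m^{\nu}\,\mathbf{H}_m[r+\nu-1]\cdots\mathbf{H}_m[r]$. The combinatorial core comes next: the $\nu=\chi_m(n-\phi)$ reduced graphs $\mathcal{H}_m[r],\dots,\mathcal{H}_m[r+\nu-1]$ all belong to $\calC_m$, which has cardinality $\chi_m$, so by pigeonhole some $\mathcal{H}\in\calC_m$ — with adjacency matrix $\mathbf{H}$ and source component $\mathcal{S}$, unique by Assumption~\ref{sufficient} — occurs at least $n-\phi$ times in the window. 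Because every $m$--dimensional reduced graph includes a self-loop at each non-faulty node, $\mathbf{H}_m[s]\ge\mathbf{I}$ for all $s$; hence in the backward product every factor that is not one of the $k\ge n-\phi$ copies of $\mathbf{H}$ can be bounded below by $\mathbf{I}$, giving $\mathbf{H}_m[r+\nu-1]\cdots\mathbf{H}_m[r]\ge\mathbf{H}^{k}$.

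Third I would exploit that $\mathcal{S}$ is a source component: there is a directed path of length at most $n-\phi-1$ from any $j\in\mathcal{S}$ to any non-faulty node, and the self-loops pad it to length exactly $k\ge n-\phi$, so in the adjacency-matrix convention $(\mathbf{H}^{k})_{ij}\ge 1$ for every $i\in\calN$ and every $j\in\mathcal{S}$. Chaining the three inequalities gives $\mathbf{\Phi}(r+\nu-1,r)_{ij}\ge\beta_m^{\nu}$ for all $i$ and all $j\in\mathcal{S}$, hence $\pi_j(r)=\sum_i\pi_i(r+\nu)\,\mathbf{\Phi}(r+\nu-1,r)_{ij}\ge\beta_m^{\nu}\sum_i\pi_i(r+\nu)=\beta_m^{\chi_m(n-\phi)}$, using that $\pi(r+\nu)$ sums to one. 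Setting $\calH[r]=\mathcal{H}$ and $\mathcal{S}_r=\mathcal{S}$ proves the first assertion, and $|\mathcal{S}_r|=|\mathcal{S}|\ge\gamma_m$ is immediate from the definition \eqref{minimal source} of $\gamma_m$ as the smallest source-component size over all $m$--dimensional reduced graphs.

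The real difficulty is conceptual rather than computational: the effective communication graph varies from iteration to iteration, so no single source component is valid throughout and one cannot simply quote a static-graph consensus estimate. The one delicate step is the pigeonhole-plus-self-loop device that extracts a power $\mathbf{H}^{k}$, $k\ge n-\phi$, of a single repeated reduced graph from the $\nu$-step window product; once that is in hand, the estimate on $\pi(r)$ is a one-line computation. I would also be careful with the orientation convention for incoming links, so that membership of $j$ in the source component translates correctly into positivity of the $j$-th column of $\mathbf{H}^{k}$ rather than of its $j$-th row.
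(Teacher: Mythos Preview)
The paper does not prove this lemma itself; it is quoted from \cite{vaidya2014iterative} without argument. Your proposal is correct and is precisely the standard argument from that reference: the window factorization $\pi(r)=\pi(r+\nu)\,\mathbf{\Phi}(r+\nu-1,r)$, the entrywise bound $\mathbf{A}[s]\ge\beta_m\mathbf{H}_m[s]$, the pigeonhole extraction of a single reduced graph $\mathcal{H}$ repeated at least $n-\phi$ times in the length-$\nu$ block, and the self-loop padding to force every column of $\mathbf{H}^{k}$ indexed by $\mathcal{S}$ to be positive are exactly the ingredients used there. The one point worth making explicit is that the inequality $\mathbf{H}_m[s]\ge\mathbf{I}$ you rely on is not spelled out in Definition~\ref{reduced graph} as written here, but it \emph{is} part of the construction in \cite{vaidya2014iterative} (each non-faulty agent retains its own value in the update, and $\beta_m$ is chosen small enough that $\mathbf{A}_{ii}[t]\ge\beta_m$), so your replacement of the non-$\mathbf{H}$ factors by $\mathbf{I}$ is legitimate.
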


\subsection{Tight Topological Condition for Scalar Iterative Byzantine Consensus}
The above analysis shows that Assumption \ref{sufficient} is sufficient for achieving Byzantine consensus iteratively. For the special case when $m=1$,(i.e., the inputs provided at individual non-faulty agents are scalars) it has been shown \cite{vaidya2012iterative} that Assumption \ref{sufficient} is also necessary.
\begin{theorem} \cite{vaidya2012iterative}
\label{iabc}
For scalar inputs, iterative approximate Byzantine consensus is achievable among non-faulty agents if and only if every $1$-dimensional reduced graph of $G(\calV, \calE)$ contains only one source component.
\end{theorem}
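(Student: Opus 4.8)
The plan is to establish the two implications separately, and the bulk of the work is in the "only if" direction. For the "if" direction, note that when $m=1$ the hypothesis --- every $1$-dimensional reduced graph of $G(\calV,\calE)$ contains only one source component --- is precisely Assumption \ref{sufficient}: the condensation of any finite directed graph has at least one source strongly connected component, so ``only one'' and ``a unique'' coincide. Hence Theorem \ref{correctness of Byz-Iter} applies with $m=1$: running Algorithm {\em Byz-Iter} (which for scalar inputs coincides with the simple algorithm of \cite{vaidya2012iterative}, and whose non-faulty state evolution is governed by the matrix representation of Lemma \ref{matrix lemma}) drives all non-faulty states to a common limit $\sum_{j=1}^{n-\phi}\pi_j(1)\,{\bf v}_0^j$ by \eqref{mixing}. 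Since each ${\bf A}[t]$ is row stochastic, the non-faulty states stay in the convex hull of the non-faulty inputs, so both validity and agreement hold and approximate Byzantine consensus is achieved; this exhibits an algorithm witnessing achievability.

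For necessity I would argue the contrapositive by an indistinguishability (scenario) argument. Suppose some $1$-dimensional reduced graph $\calH(\calN,\calE_\calF)$ has two distinct source components $S_1$ and $S_2$. Let $F:=\calV\setminus\calN$ be the set of at most $f$ nodes deleted in step (i) of Definition \ref{reduced graph}, and for each non-faulty $i$ let $R_i\subseteq\calN$ be the set of at most $f$ incoming neighbours whose edges into $i$ were deleted in step (ii). Because $S_1$ is a source component, no node of $S_1$ has an incoming $\calH$-edge from outside $S_1$, so in $G$ every incoming neighbour of $i\in S_1$ lies in $S_1\cup(\calI_i\cap F)\cup R_i$; that is, all of $i$'s ``external'' influence is carried by at most $f$ faulty nodes plus at most $f$ deleted edges, and symmetrically for $S_2$. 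I would then fix any candidate iterative algorithm and consider an execution in which the faulty set is $F$, the inputs of $S_1$ and $S_2$ are two well-separated values $a\ne b$ (the remaining non-faulty inputs arbitrary), and each node of $F$ presents to its $S_1$-neighbours a view consistent with ``every trusted value equals $a$'' and to its $S_2$-neighbours a view consistent with ``$b$'' (a Byzantine node may send inconsistent messages). Comparing against the two reference executions in which all non-faulty inputs equal $a$, respectively $b$ --- where validity forces every non-faulty node to keep value $a$, resp. $b$, at all rounds --- I would show by induction on the round index that the nodes of $S_1$ are pinned arbitrarily close to $a$ and those of $S_2$ arbitrarily close to $b$ for all time, contradicting $\lim_{t\diverge}|{\bf v}_t^i-{\bf v}_t^j|=0$.

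The main obstacle is the treatment of the deleted edges $R_i$: unlike the nodes of $F$, the tails of these edges are non-faulty and transmit honest, evolving values, so a node $i\in S_1$ is not literally cut off from the rest of the network, and a generic valid algorithm need not be of a ``trimming'' form. To push the pinning argument through, one must show that no algorithm meeting the validity requirement can let at most $f$ incoming messages pull a non-faulty node outside the range spanned by its trusted inputs --- equivalently, that the up-to-$f$ deleted incoming edges of each $i\in S_1$ can be simulated round by round by the Byzantine behaviour of $F$ in a companion execution that $i$ cannot distinguish from the constructed one, so that the effective communication graph experienced by $S_1$ is contained in the restriction of $\calH$ to $S_1$ (and likewise for $S_2$). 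Setting up these companion executions and verifying the per-round indistinguishability at every node of $S_1\cup S_2$ is the technical heart of the proof; once this bookkeeping is in place, the conclusion that two source components forced to $a$ and $b$ cannot reach consensus is immediate.
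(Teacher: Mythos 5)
Your sufficiency argument is sound and is essentially the route the paper itself takes: for $m=1$ the stated condition is exactly Assumption \ref{sufficient} (every finite digraph has at least one source strongly connected component, so ``only one'' and ``unique'' coincide), Theorem \ref{correctness of Byz-Iter} (equivalently, the correctness of Algorithm \ref{BconsensusA scalar}) gives agreement, and validity follows because the row-stochastic representation of Lemma \ref{matrix lemma} keeps every non-faulty state in the convex hull of the non-faulty inputs. Note, however, that the paper contains no proof of Theorem \ref{iabc} at all --- it is imported verbatim from \cite{vaidya2012iterative} --- so the only meaningful comparison is with the proof in that reference.

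The necessity half has a genuine gap, and it is precisely the one you flag and then defer. In a $1$-dimensional reduced graph a node $i\in S_1$ loses (i) its in-neighbours in $\calF$ and (ii) up to $f$ further incoming edges whose tails are \emph{non-faulty}. The adversary controls only $\calF$, so it cannot simulate the honest, evolving messages on the type-(ii) edges; worse, the deleted sets $R_i$ differ from node to node, so there is no single companion execution with one faulty set of size at most $f$ that simultaneously explains away all external influence on $S_1$. Your proposed repair --- that no valid algorithm can let at most $f$ incoming messages pull a node outside the range of its trusted inputs --- is not a property of an arbitrary algorithm; it is exactly what must be proved, and establishing it node by node does not stitch into a global pinning of $S_1$. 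The proof in \cite{vaidya2012iterative} sidesteps this by first proving a combinatorial equivalence between the source-component condition and a partition condition: for every partition $L, C, R, \calF$ of $\calV$ with $L,R$ nonempty and $|\calF|\le f$, either some node of $R$ has at least $f+1$ in-neighbours in $L\cup C$ or some node of $L$ has at least $f+1$ in-neighbours in $R\cup C$. Under the negation of this condition each node of $L$ has at most $f$ in-neighbours outside $L\cup\calF$ in the \emph{original} graph, so a per-node indistinguishable execution in which those at most $f$ neighbours are declared faulty and all non-faulty inputs equal $a$ (resp.\ $b$ for $R$) pins $L$ at $a$ and $R$ at $b$ by validity, contradicting agreement. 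Without that equivalence lemma, or an argument replacing it, your induction does not close.
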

Moreover, the following simple algorithm (Algorithm \ref{BconsensusA scalar}) works under Assumption \ref{sufficient} when $m=1$.
\begin{algorithm}

\caption{Algorithm {Scalar Byzantine Consensus}: iteration $t\ge 1$  \cite{vaidya2012iterative}}
\label{BconsensusA scalar}
{\normalsize
Transmit $v^i[t-1]$ on all outgoing links\;
\vskip 0.2\baselineskip

Receive messages on all incoming links. {\small \color{OliveGreen}\% These message values $w_j[t]$ for each $j\in \calI_i$ form a multiset $R^i[t]$ of size $|\calI_i|$. \%}
\vskip 0.2\baselineskip

Sort the received values $w_j[t]$ for each $j\in \calI_i$ in a non-decreasing order\;
\vskip 0.2\baselineskip
Remove the largest $f$ values and the smallest $f$ values. {\small \color{OliveGreen}\%  Denote the set of indices of incoming neighbors whose values have not been removed at iteration $t$ by $\calI_i^*[t]$.\%}
\vskip 0.2\baselineskip

Update $v^i$ as follows: $v^i[t]\gets \frac{\sum_{j\in \calI_i^*[t]} w_j[t] + v^i[t-1]}{1+|\calI_i^*[t]|}$\;
}
\end{algorithm}

In addition, it has been show that the dynamic of the non-faulty agents states admits the same matrix representation as in Subsection \ref{mr} with the reduced graph being $1$--dimensional reduced graph defined in Definition \ref{reduced graph}.

With the above background on Byzantine vector consensus, we are now ready to present our first algorithm
and its analysis.

\section{Byzantine Fault-Tolerant Non-Bayesian Learning (BFL)}
\label{main results}

In this section, we present our first learning rule, named Byzantine Fault-Tolerant Non-Bayesian Learning (BFL). In BFL, each agent $i$ maintains a belief vector $\mu^i\in \reals^m$, which is a distribution over the set $\Theta$, with $\mu^i(\theta)$ being the probability with which the agent $i$ {\em believes} that $\theta$ is the true environmental state. Since no signals are observed before the execution of an algorithm, the belief $\mu^i$ is often initially set to be uniform over the set $\Theta$, i.e.,
$\pth{\mu_0^i(\theta_1), \mu_0^i(\theta_1), \ldots, \mu_0^i(\theta_m)}^T=\pth{\frac{1}{m}, \ldots, \frac{1}{m}}^T$. 
Recall that $\theta^*$ is the true environmental state.
We say the networked agents collaboratively learn $\theta^*$ if for every non-faulty agent $i\in \calN$,
\begin{align}
\label{goal}
\lim_{t\diverge}\mu_t^i(\theta^*) ~=~ 1 \,\,\, a.s. 
\end{align}
where $a.s.$ denotes {\em almost surely}.


BFL is a modified version of the geometric averaging update rule that has been investigated in previous work
\cite{nedic2014nonasymptotic,rad2010distributed,Lalitha2014,shahrampour2014distributed}.
In particular, we modify the averaging rule to take into account Byzantine faults.
More importantly, in each iteration, we use the likelihood of the {\em cumulative} local observations (instead of the likelihood of the {\em current} observation only) to update the local beliefs. 

\vskip 0.5\baselineskip

For $t\ge 1$, the steps to be performed by agent $i$ in the $t$--th iteration are listed below. 
Note that faulty agents can deviate from the algorithm specification.
The algorithm below uses {\em One-Iter} presented in the previous section as a primitive. Recall that $s_{1,t}^i$ is the cumulative local observations up to iteration $t$. Since the observations are $i.i.d.$, it holds that
$\ell_i(s_{1,t}^i|\theta)=\prod_{r=1}^t \ell_i(s_{r}^i |\theta)$. So $\ell_i(s_{1,t}^i|\theta)$ can be computed iteratively in Algorithm \ref{alg:new}.
%
%
\begin{algorithm}

\caption{BFL:~Iteration $t\geq 1$ at agent $i$}
\label{alg:new}
{\normalsize
$\eta_t^i\gets $ \,{\em One-Iter}$(\log \mu_{t-1}^i)$\;
Observe $s_t^i$\;

\For{$\theta\in\Theta$}
{$\ell_i(s_{1,t}^i|\theta)\gets \ell_i(s^i_t|\theta)\, \ell_i(s_{1,t-1}^i|\theta)$\;
$\mu_{t}^i(\theta)\gets \frac{\ell_i(s_{1, t}^i|\theta)\exp \pth{\eta_{t}^i(\theta)}}{\sum_{p=1}^m \ell_i(s_{1, t}^i|\theta_p)\exp \pth{\eta_{t}^i(\theta_p)}}$\;}
}
%
%
\end{algorithm}

The main difference of Algorithm \ref{alg:new} with respect to the algorithms in \cite{nedic2014nonasymptotic,rad2010distributed,Lalitha2014,shahrampour2014distributed}
is that (i) our algorithm uses a Byzantine consensus iteration as a primitive (in line 1), and (ii) $\ell_i(s_{1,t}^i|\theta)$ used in line 5 is the
likelihood for observations from iteration 1 to $t$ (the previous algorithms instead use $\ell_i(s_t^i |\theta)$ here). 
Observe that the consensus step is being performed on $\log$ of the beliefs, with the result being stored as $\eta_t^i$ (in line 1) and used in line 4 to compute the new beliefs.

Recalling the matrix representation of the {\em Byz-Iter} algorithm as per Lemma \ref{matrix lemma}, we can write the following equivalent representation of line 1 of Algorithm \ref{alg:new}.
\begin{eqnarray}
\eta_t^i(\theta) & = & \sum_{j=1}^{n-\phi} {\bf A}_{ij}[t]\log \mu_{t-1}^j(\theta)=\log \prod_{j=1}^{n-\phi}\mu_{t-1}^j(\theta)^{{\bf A}_{ij}[t]}, ~~~~\forall \theta\in \Theta.
\label{e:eta}
\end{eqnarray}
where ${\bf A}[t]$ is a row stochastic matrix whose properties are specified in Lemma \ref{matrix lemma}.
Note that $\mu_{t}^i(\theta)$ is {\bf random} for each $i\in \calN$ and $t\ge 1$, as it is updated according to local random observations. Since the consensus is performed over $\log \mu_{t}^i\in \reals^m$, the update matrix ${\bf A}[t]$ is also {\bf random}. In particular, for each $t\ge 1$, matrix ${\bf A}[t]$ is dependent on {\em all the cumulative observations over the network} up to iteration $t$.  This dependency makes it non-trivial to adapt analysis from previous algorithms to our setting. In addition, adopting the local cumulative observation likelihood makes the analysis with Byzantine faults easier.

\subsection{Identifiability}
In the absence of agent failures \cite{jadbabaie2012non}, for the networked agents to detect the true hypothesis $\theta^*$, it is sufficient to assume that $G(\calV, \calE)$ is strongly connected, and that $\theta^*$ is globally identifiable. That is, for any $\theta\not=\theta^*$, there exists a node $j\in \calV$ such that the Kullback-Leiber divergence between the true marginal $\ell_j(\cdot |\theta^*)$ and the marginal $\ell_j(\cdot |\theta)$, denoted by $D \pth{\ell_j(\cdot |\theta^*)||\ell_j(\cdot |\theta)}$, is nonzero;
equivalently,
\begin{align}
\label{failure-freeidentify}
\sum_{j\in \calV} D \pth{\ell_j(\cdot |\theta^*)||\ell_j(\cdot |\theta)}~\not=~0,
\end{align}
where $D \pth{\ell_j(\cdot |\theta^*)||\ell_j(\cdot |\theta)}$ is defined as
\begin{align}
\label{KL}
D \pth{\ell_j(\cdot |\theta^*)||\ell_j(\cdot |\theta)}\triangleq \sum_{w_j\in \calS_j}\ell_j(w_j|\theta^*)\log \frac{\ell_j(w_j|\theta^*)}{\ell_j(w_j|\theta)}.
\end{align}

Since $\theta^*$ may change from execution to execution, \eqref{failure-freeidentify} is required to hold for any choice of $\theta^*$.
Intuitively speaking, if any pair of states $\theta_1$ and $\theta_2$ can be distinguished by at least one agent in the network, then sufficient exchange of local beliefs over strongly connected network will enable every agent distinguish $\theta_1$ and $\theta_2$. However, in the presence of Byzantine agents, 
a stronger global identifiability condition is required. The following assumption builds upon Assumption \ref{sufficient}.
\begin{assumption}
\label{ass}
Suppose that Assumption \ref{sufficient} holds for $m=|\Theta|$. For any $\theta\not=\theta^*,$ and for any $m$--dimensional reduced graph $\calH$ of $G(\calV, \calE)$ with $\calS_{\calH}$ denoting the unique source component, the following holds
\begin{align}
\label{failure identify}
\sum_{j\in \calS_{\calH}} D\pth{\ell_j(\cdot |\theta^*)\parallel\ell_j(\cdot |\theta)}~\not=~0.
\end{align}
\end{assumption}
In contrast to \eqref{failure-freeidentify}, where the summation is taken over all the agents in the network, in \eqref{failure identify},  the summation is taken over agents in the source component only. Intuitively, the condition imposed by Assumption \ref{ass} is that all the agents in the source component can detect the true state $\theta^*$ collaboratively. If iterative consensus is achieved, the accurate belief can be propagated from the source component to every other non-faulty agent in the network.

\begin{remark}
We will show later that when Assumption \ref{ass}  holds, BFL algorithm enables all the non-faulty agents concentrate their beliefs on the true state $\theta^*$ almost surely. That is, Assumption \ref{ass} is a sufficient condition for a consensus-based non-Bayesian learning algorithm to exist. However, Assumption \ref{ass} is not necessary, observing that Assumption \ref{sufficient}  (upon which Assumption \ref{ass} builds) is not necessary for $m$-dimensional Byzantine consensus algorithms to exist. As illustrated by our second learning rule (described later), the adoption of $m$-dimensional Byzantine consensus primitives is {\em not necessary}. Nevertheless, BFL contains our main algorithmic and analytical ideas. In addition, BFL provides an alternative learning rule for the failure-free setting (where no fault-tolerant consensus primitives are needed).
\end{remark}

\subsection{Convergence Results}
Our proof parallels the structure of a proof in \cite{nedic2014nonasymptotic}, but with some key differences to take into account our update rule for the belief vector.\\

For any $\theta_1, \theta_2\in \Theta$, and any $i\in \calV$, define $\bm{\psi}_{t}^i(\theta_1, \theta_2)$ and $\calL_{t}(\theta_1, \theta_2)$ as follows
\begin{align}
\label{b1}
\bm{\psi}_{t}^i(\theta_1, \theta_2)\triangleq \log \frac{\mu_t^i(\theta_1)}{\mu_t^i(\theta_2)}, \quad \calL^i_{t}(\theta_1, \theta_2)~\triangleq~\log \frac{\ell_i(s_{t}^i|\theta_1)}{\ell_i(s_{t}^i|\theta_2)}.
\end{align}
To show Algorithm \ref{alg:new} solves \eqref{goal}, we will show that  $\bm{\psi}_{t}^i(\theta, \theta^*)\toas -\infty$ for $\theta\not=\theta^*$, which implies that $\mu_t^i(\theta)\toas 0$ for all $\theta\not=\theta^*$ and for all $i\in \calN$, i.e.,  all non-faulty agents asymptotically concentrate their beliefs on the true hypothesis $\theta^*$. We do this by investigating the dynamics of beliefs which is represented compactly in a matrix form.

For each $\theta\not=\theta^*$, and each $i\in \calN=\{1, 2, \cdots, n-\phi\}$, we have
\begin{align}
\label{rewrite1}
\nonumber
\bm{\psi}_{t}^i(\theta, \theta^*)&=\log \frac{\mu_{t}^i(\theta)}{\mu_{t}^i(\theta^*)}\overset{(a)}{=}\log \pth{\prod_{j=1}^{n-\phi} \pth{\frac{\mu_{t-1}^j(\theta)}{\mu_{t-1}^j(\theta^*)}}^{{\bf A}_{ij}[t]}\times \frac{\ell_i(s_{1,t}^i|\theta)}{\ell_i(s_{1,t}^i|\theta^*)}}\\
\nonumber
&=\sum_{j=1}^{n-\phi} {\bf A}_{ij}[t] \log \frac{\mu_{t-1}^j(\theta)}{\mu_{t-1}^j(\theta^*)} +\log \frac{\ell_i(s_{1,t}^i|\theta)}{\ell_i(s_{1,t}^i|\theta^*)}\\
&=\sum_{j=1}^{n-\phi} {\bf A}_{ij}[t] \bm{\psi}_{t-1}^j(\theta, \theta^*) +\sum_{r=1}^t\calL^i_{r}(\theta, \theta^*),
\end{align}
where equality (a) follows from \eqref{e:eta} and the update of $\mu^i$ in Algorithm \ref{alg:new}, and the last equality follows from \eqref{b1} and the fact that the local observations are $i.i.d.$ for each agent.

Let $\bm{\psi}_{t}(\theta, \theta^*)\in \reals^{n-\phi}$ be the vector that stacks $\bm{\psi}_t^i(\theta, \theta^*)$, with the $i$--th entry being $\bm{\psi}_t^i(\theta, \theta^*)$ for all $i\in \calN$.
The evolution of $\bm{\psi}(\theta, \theta^*)$ can be compactly written as
\begin{align}
\label{matrix form}
\bm{\psi}_{t}(\theta, \theta^*)&={\bf A}[t]\bm{\psi}_{t-1}(\theta, \theta^*)+\sum_{r=1}^t\calL_{r}(\theta, \theta^*).
\end{align}
Expanding \eqref{matrix form}, we get
\begin{align}
\label{int1}
\bm{\psi}_{t}(\theta, \theta^*)
={\bf \Phi}(t,1)\bm{\psi}_0(\theta, \theta^*)+\sum_{r=1}^{t} {\bf \Phi}(t,r+1)\sum_{k=1}^r\calL_k(\theta, \theta^*).
\end{align}
%
For each $\theta\in \Theta$ and $i\in \calV$, define $ H_i(\theta, \theta^*)\in \reals^{n-\phi}$  as
\begin{align}
\label{expected}
\nonumber
H_i(\theta, \theta^*)&\triangleq \sum_{w_i\in \calS_i} \ell_i(w_i| \theta^*) \log \frac{\ell_i(w_i\mid \theta)}{\ell_i(w_i\mid \theta^*)}\\
\nonumber
&= -D(\ell_i(\cdot |\theta^*)\parallel \ell_i(\cdot |\theta)) ~~~\text{by \eqref{KL}}\\
&\le 0.
\end{align}
Let $\calH\in \calC$ be an arbitrary reduced graph with source component $\calS_{\calH}$. Define $C_0$ and $C_1$ as
\begin{align}
-C_0&\triangleq  \min_{i\in \calV} \min_{\theta_1, \theta_2\in \Theta; \theta_1\not= \theta_2} \min_{w_i\in \calS_i} \pth{\log \frac{\ell_i(w_i|\theta_1)}{\ell_i(w_i|\theta_2)}}, \label{c0}\\
C_1&\triangleq \min_{\calH\in \calC} \, \min_{\theta, \theta^* \in \Theta; \theta\not= \theta^*} \sum_{i\in \calS_{\calH}}
D(\ell_i(\cdot | \theta^*) \parallel \ell_i(\cdot | \theta)).
\label{c1}
\end{align}
The constant $C_0$ serves as an universal upper bound on $|\log \frac{\ell_i(w_i|\theta_1)}{\ell_i(w_i|\theta_2)}|$ for all choices of $\theta_1$ and $\theta_2$, and for all signals. Intuitively, the constant $C_1$ is the minimal detection capability of the source component  under Assumption \ref{ass}.

Due to $|\Theta|=m< \infty$ and  $|\calS_i|< \infty$ for each $i\in \calN$, we know that $C_0<\infty$. Besides, it is easy to see that $-C_0\le 0$ (thus, $C_0\ge 0$). In addition, under Assumption \ref{ass}, we have $C_1>0$.
Now we present a key lemma for our main theorem.
\begin{lemma}
\label{second term goal}
Under  Assumption \ref{ass}, for any $\theta\not= \theta^*$, it holds that
\begin{align}
\frac{1}{t^2} \sum_{r=1}^{t}\pth{\sum_{j=1}^{n-\phi}{\bf \Phi}_{ij}(t,r+1)\sum_{k=1}^r\calL^j_k(\theta, \theta^*)- r\sum_{j=1}^{n-\phi}\pi_j(r+1)H_j(\theta, \theta^*)} \toas 0.
\end{align}
\end{lemma}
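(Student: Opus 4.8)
The plan is to decompose the double sum into a "martingale-type" fluctuation part plus a deterministic drift that is controlled by the exponential mixing of the backward products $\mathbf{\Phi}(t,r+1)$. First I would split each inner term as
\[
\sum_{j=1}^{n-\phi}\mathbf{\Phi}_{ij}(t,r+1)\sum_{k=1}^r\calL^j_k(\theta,\theta^*)
= \sum_{j=1}^{n-\phi}\mathbf{\Phi}_{ij}(t,r+1)\sum_{k=1}^r\big(\calL^j_k(\theta,\theta^*)-H_j(\theta,\theta^*)\big)
+ \sum_{j=1}^{n-\phi}\mathbf{\Phi}_{ij}(t,r+1)\, r\, H_j(\theta,\theta^*),
\]
so that the quantity in the lemma becomes
\[
\frac{1}{t^2}\sum_{r=1}^t \sum_{j=1}^{n-\phi}\mathbf{\Phi}_{ij}(t,r+1)\sum_{k=1}^r\big(\calL^j_k(\theta,\theta^*)-H_j(\theta,\theta^*)\big)
+ \frac{1}{t^2}\sum_{r=1}^t r \sum_{j=1}^{n-\phi}\big(\mathbf{\Phi}_{ij}(t,r+1)-\pi_j(r+1)\big)H_j(\theta,\theta^*).
\]
The second (deterministic) piece I would bound directly: by Theorem \ref{convergencerate}, $|\mathbf{\Phi}_{ij}(t,r+1)-\pi_j(r+1)|\le (1-\beta_m^\nu)^{\lceil (t-r)/\nu\rceil}$, and $|H_j(\theta,\theta^*)|\le C_0$ (from \eqref{c0} and \eqref{expected}), so the summand is at most $r(n-\phi)C_0(1-\beta_m^\nu)^{\lceil(t-r)/\nu\rceil}$; summing the geometric tail over $r$ gives something of order $t$ (the factor $r\le t$ times a convergent geometric series in $t-r$), hence after division by $t^2$ this term is $O(1/t)\to 0$ deterministically, uniformly in the randomness.

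For the first (random) piece, note that for fixed $j$ the increments $\calL^j_k(\theta,\theta^*)-H_j(\theta,\theta^*)$, $k\ge 1$, are i.i.d., zero-mean (by the definition \eqref{expected} of $H_j$ as the expectation of $\calL^j_k$ under $\theta^*$), and bounded by $2C_0$. Writing $S^j_r \triangleq \sum_{k=1}^r(\calL^j_k(\theta,\theta^*)-H_j(\theta,\theta^*))$, the strong law of large numbers gives $S^j_r/r\toas 0$ for each $j$; since there are finitely many $j$, this holds simultaneously a.s. Then, using $\sum_{j}\mathbf{\Phi}_{ij}(t,r+1)=1$ (row stochasticity, Lemma \ref{matrix lemma}) and $\mathbf{\Phi}_{ij}\ge 0$,
\[
\left|\frac{1}{t^2}\sum_{r=1}^t\sum_{j=1}^{n-\phi}\mathbf{\Phi}_{ij}(t,r+1)S^j_r\right|
\le \frac{1}{t^2}\sum_{r=1}^t \max_j |S^j_r|
\le \frac{1}{t}\sum_{r=1}^t \frac{\max_j|S^j_r|}{r},
\]
and the right-hand side is the Cesàro average of a sequence that converges a.s.\ to $0$ (note $\max_j|S^j_r|/r \le \max_j |S^j_r/r| \to 0$ a.s.), hence it tends to $0$ a.s. This disposes of both pieces and proves the lemma.

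The main obstacle is the interplay between the randomness of $\mathbf{\Phi}(t,r+1)$ and the random increments $\calL^j_k$: one must be careful that the bound $|\mathbf{\Phi}_{ij}(t,r+1)-\pi_j(r+1)|\le(1-\beta_m^\nu)^{\lceil(t-r+1)/\nu\rceil}$ from Theorem \ref{convergencerate} holds pathwise (for every realization of the observation-dependent matrices $\mathbf{A}[\cdot]$, since Assumption \ref{ass} forces Assumption \ref{sufficient} to hold on every realization), and likewise that $\pi_j(r+1)$ is itself random. The argument above is robust to this because it only ever uses the \emph{uniform} (deterministic) geometric bound on the deviation and the row-stochasticity of $\mathbf{\Phi}$, together with the a.s.\ SLLN for the increments — no independence between $\mathbf{\Phi}$ and the $\calL^j_k$ is needed, only the crude triangle-inequality/convexity estimates. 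A secondary point to check is that $\mathbf{\Phi}_{ij}(t,r+1)$ is indexed with $r+1\le t+1$, so for $r=t$ we have $\mathbf{\Phi}(t,t+1)=\mathbf{I}$ and the deviation bound degenerates gracefully; this boundary term contributes $O(t)/t^2\to 0$ anyway and needs no special treatment.
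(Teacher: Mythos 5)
Your proposal is correct, and it follows the same high-level strategy as the paper (isolate a term that vanishes deterministically by the geometric mixing of ${\bf \Phi}(t,r+1)$, and a term that vanishes almost surely by the strong law of large numbers), but the details differ in two ways. First, the decomposition is transposed: the paper adds and subtracts $\sum_j \pi_j(r+1)\sum_{k=1}^r\calL^j_k(\theta,\theta^*)$, so its mixing-error term is $\sum_j({\bf \Phi}_{ij}-\pi_j)\sum_k\calL^j_k$ and its SLLN term carries the weights $\pi_j(r+1)$; you instead add and subtract $r\sum_j{\bf \Phi}_{ij}(t,r+1)H_j(\theta,\theta^*)$, so your mixing-error term is $r\sum_j({\bf \Phi}_{ij}-\pi_j)H_j$ and your SLLN term carries the weights ${\bf \Phi}_{ij}(t,r+1)$. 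Both work because each weight vector is row-stochastic pathwise and both error terms are bounded by $rC_0$ times the geometric deviation bound of Theorem \ref{convergencerate}. Second, for the SLLN term the paper splits the range of $r$ at $\sqrt{t}$ and runs an explicit $\epsilon$--$t(\epsilon)$ argument on each piece, whereas you bound $\bigl|\sum_j{\bf \Phi}_{ij}(t,r+1)S^j_r\bigr|\le\max_j|S^j_r|$ and observe that $\frac{1}{t^2}\sum_{r=1}^t\max_j|S^j_r|\le\frac{1}{t}\sum_{r=1}^t\frac{\max_j|S^j_r|}{r}$ is a Ces\`aro mean of a sequence converging a.s.\ to zero; this is cleaner and avoids the $\sqrt{t}$ device entirely. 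Your attention to the pathwise validity of the mixing bound and the row-stochasticity of the observation-dependent matrices ${\bf A}[\cdot]$ addresses exactly the dependency issue the paper flags as the main obstacle, so no gap remains.
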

As it can be seen later, the proof of Lemma \ref{second term goal} is significantly different from the analogous lemma in \cite{nedic2014nonasymptotic}. 
\begin{theorem}
\label{almost sure}
When Assumption \ref{ass} holds, each non-faulty agent $i\in \calN$ will concentrate its belief on the true hypothesis $\theta^*$ almost surely, i.e., $\mu_t^i(\theta) \toas 0$ for all $\theta\not= \theta^*$.
\end{theorem}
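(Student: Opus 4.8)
The plan is to combine the matrix-expansion of the belief log-ratios in \eqref{int1} with the limiting behavior of the backward products $\bm{\Phi}(t,r)$ from \eqref{mixing} and Theorem \ref{convergencerate}, together with the key Lemma \ref{second term goal}, and show that $\bm{\psi}_t^i(\theta,\theta^*)\toas-\infty$ for every $\theta\neq\theta^*$ and every $i\in\calN$; since $\mu_t^i(\theta)/\mu_t^i(\theta^*)=\exp(\bm{\psi}_t^i(\theta,\theta^*))$ and $\mu_t^i$ is a probability vector, this immediately gives $\mu_t^i(\theta)\toas 0$. First I would dispose of the initial-condition term: $\bm{\psi}_0(\theta,\theta^*)$ is a fixed bounded vector (the beliefs start uniform, so it is the zero vector, but boundedness is all we need), and since ${\bf \Phi}(t,1)$ is row-stochastic with entries bounded by $1$, the term ${\bf \Phi}(t,1)\bm{\psi}_0(\theta,\theta^*)$ stays bounded, hence contributes $o(t^2)$ — in fact $O(1)$.

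Next I would rewrite the dominant (double-sum) term of \eqref{int1} by adding and subtracting its "expected'' version $\sum_{r=1}^t r\sum_{j}\pi_j(r+1)H_j(\theta,\theta^*)$. Lemma \ref{second term goal} says precisely that, after dividing by $t^2$, the difference between the actual double sum and this expected version tends to $0$ almost surely. So it remains to control the deterministic quantity $\sum_{r=1}^t r\sum_{j=1}^{n-\phi}\pi_j(r+1)H_j(\theta,\theta^*)$. Here I would invoke Lemma \ref{lblimiting}: for each $r$ there is a reduced graph with source component $\calS_r$ on which $\pi_j(r+1)\ge\beta_m^{\chi_m(n-\phi)}$, and $|\calS_r|\ge\gamma_m$. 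Since every $H_j(\theta,\theta^*)=-D(\ell_j(\cdot|\theta^*)\parallel\ell_j(\cdot|\theta))\le 0$, we may drop all terms outside $\calS_r$ and bound
\[
\sum_{j=1}^{n-\phi}\pi_j(r+1)H_j(\theta,\theta^*)\;\le\;\beta_m^{\chi_m(n-\phi)}\sum_{j\in\calS_r}H_j(\theta,\theta^*)\;=\;-\beta_m^{\chi_m(n-\phi)}\sum_{j\in\calS_r}D(\ell_j(\cdot|\theta^*)\parallel\ell_j(\cdot|\theta))\;\le\;-\beta_m^{\chi_m(n-\phi)}\,C_1,
\]
where the last step uses the definition \eqref{c1} of $C_1$ and Assumption \ref{ass}, which guarantees $C_1>0$. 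Summing over $r=1,\dots,t$ gives $\sum_{r=1}^t r\sum_j\pi_j(r+1)H_j(\theta,\theta^*)\le -\beta_m^{\chi_m(n-\phi)}C_1\cdot\frac{t(t+1)}{2}$, which is $-\Theta(t^2)$ with a strictly negative coefficient.

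Putting the pieces together: dividing \eqref{int1} by $t^2$, the initial term vanishes, the stochastic fluctuation vanishes a.s.\ by Lemma \ref{second term goal}, and the remaining deterministic term is bounded above by $-\frac{1}{2}\beta_m^{\chi_m(n-\phi)}C_1<0$ in the limit. Hence $\limsup_{t\to\infty}\frac{1}{t^2}\bm{\psi}_t^i(\theta,\theta^*)\le-\frac{1}{2}\beta_m^{\chi_m(n-\phi)}C_1<0$ almost surely, so $\bm{\psi}_t^i(\theta,\theta^*)\toas-\infty$, and therefore $\mu_t^i(\theta)\toas 0$ for all $\theta\neq\theta^*$ and all $i\in\calN$; by normalization $\mu_t^i(\theta^*)\toas 1$. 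Finally, since $\Theta$ is finite, a union bound over the finitely many pairs $(\theta,\theta^*)$ preserves the almost-sure statement. The main obstacle is Lemma \ref{second term goal} itself — establishing that the weighted, time-inhomogeneous random sum concentrates around its mean at rate $o(t^2)$ almost surely — because the consensus matrices ${\bf A}[t]$ (hence $\bm{\Phi}(t,r)$ and $\pi(r)$) are themselves random and depend on the entire observation history, so one cannot directly appeal to a classical SLLN; one must exploit the uniform geometric mixing from Theorem \ref{convergencerate} to decouple $\bm{\Phi}(t,r+1)$ from the increments $\calL_k$ and then apply a martingale or Borel–Cantelli argument to the resulting near-independent sum.
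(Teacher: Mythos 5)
Your proposal is correct and follows essentially the same route as the paper's own proof: drop the initial term (uniform priors give $\bm{\psi}_0=\zeros$), add and subtract $r\sum_j\pi_j(r+1)H_j(\theta,\theta^*)$, invoke Lemma \ref{second term goal} for the $o(t^2)$ fluctuation, and use Lemma \ref{lblimiting} with the definition of $C_1$ to get the $-\tfrac{t^2}{2}\beta_m^{\chi_m(n-\phi)}C_1$ drift, so that $\bm{\psi}_t^i(\theta,\theta^*)\toas-\infty$. Your closing remarks about why Lemma \ref{second term goal} is the real obstacle (randomness of ${\bf A}[t]$ and its dependence on the observation history) accurately reflect where the paper's technical work actually lies.
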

\begin{proof}
Consider any $\theta\not=\theta^*$.
Recall from \eqref{int1} that
\begin{align*}
\bm{\psi}_{t}(\theta, \theta^*)&={\bf \Phi}(t,1)\bm{\psi}_0(\theta, \theta^*)+\sum_{r=1}^{t} {\bf \Phi}(t,r+1)\sum_{k=1}^r\calL_k(\theta, \theta^*)\\
&=\sum_{r=1}^{t} {\bf \Phi}(t,r+1)\sum_{k=1}^r\calL_k(\theta, \theta^*).
\end{align*}
The last equality holds as $\mu_0^i$ is uniform, and $\bm{\psi}^i_0(\theta, \theta^*)=0$ for each $i\in \calN$.
Since the supports of $\ell_i(\cdot|\theta)$ and $\ell_i(\cdot|\theta^*)$ are the whole signal space $\calS_i$ for each agent $i\in \calN$, it holds that $\left |\frac{\ell_i(w_i|\theta)}{\ell_i(w_i|\theta^*)}\right |<\infty$ for each $w_i\in \calS_i$, and
\begin{align}
\label{finite of H}
0\ge H_i(\theta, \theta^*)\ge \min_{w_i\in \calS_i} \pth{\log \frac{\ell_i(w_i|\theta)}{\ell_i(w_i|\theta^*)}}\ge ~-C_0 >-\infty.
\end{align}
By \eqref{finite of H}, we know that
$
|\sum_{j=1}^{n-\phi}\pi_j(r+1)  H_j(\theta, \theta^*)|\le C_0<\infty.
$
Due to the finiteness of $\sum_{j=1}^{n-\phi}\pi_j(r+1)  H_j(\theta, \theta^*)$, we are able to add and subtract $r \ones \sum_{j=1}^{n-\phi}\pi_j(r+1)  H_j(\theta, \theta^*)$ from \eqref{int1}. We get
\begin{align}
\label{int2}
\nonumber
\bm{\psi}_{t}(\theta, \theta^*)
&=\sum_{r=1}^{t}\pth{{\bf \Phi}(t,r+1)\sum_{k=1}^r\calL_k(\theta, \theta^*)-r \ones \sum_{j=1}^{n-\phi}\pi_j(r+1)  H_j(\theta, \theta^*) }\\
&\quad +\sum_{r=1}^t r \ones \sum_{j=1}^{n-\phi}\pi_j(r+1)  H_j(\theta, \theta^*).
\end{align}
For each $i\in \calN$, we have
\begin{align}
\label{evo}
\nonumber
\bm{\psi}^i_{t}(\theta, \theta^*)&=\sum_{r=1}^{t}\pth{\sum_{j=1}^{n-\phi}{\bf \Phi}_{ij}(t,r+1)\sum_{k=1}^r\calL^j_k(\theta, \theta^*)- r\sum_{j=1}^{n-\phi}\pi_j(r+1)H_j(\theta, \theta^*)}\\
&\quad+\sum_{r=1}^{t}  r\sum_{j=1}^{n-\phi}\pi_j(r+1)H_j(\theta, \theta^*).
\end{align}
To show $\lim_{t\diverge}\mu_t^i(\theta) \toas 0$ for $\theta\not= \theta^*$,  it is enough to show $\bm{\psi}^i_{t}(\theta, \theta^*)\toas -\infty$. Our convergence proof has similar structure as the analysis in \cite{nedic2014nonasymptotic}.
From Lemma \ref{second term goal}, we know that
\begin{align}
\label{lll2}
\frac{1}{t^2} \sum_{r=1}^{t}\pth{\sum_{j=1}^{n-\phi}{\bf \Phi}_{ij}(t,r+1)\sum_{k=1}^r\calL^j_k(\theta, \theta^*)- r\sum_{j=1}^{n-\phi}\pi_j(r+1)H_j(\theta, \theta^*)} \toas 0.
\end{align}
Next we show that the second term of the right hand side of \eqref{evo} decreases quadratically in $t$.
\begin{align}
\label{third term}
\nonumber
\sum_{r=1}^{t}  r\sum_{j=1}^{n-\phi} \pi_j(r+1)H_j(\theta, \theta^*)&\le \sum_{r=1}^{t}  r\sum_{j\in \calS_{r}} \pi_j(r+1)H_j(\theta, \theta^*)~~~~~~\text{by \eqref{expected}}\\
\nonumber
&\le   \sum_{r=1}^{t}  r  \beta^{\chi (n-\phi)} \sum_{j\in \calS_{r}}H_j(\theta, \theta^*)~~~~~~\text{by Lemma \ref{lblimiting}}\\
\nonumber
& \le - \sum_{r=1}^{t}  r  \beta^{\chi (n-\phi)} C_1~~~~~~~\text{by \eqref{c1} and \eqref{expected}}\\
&\le -\frac{t^2}{2}\beta^{\chi (n-\phi)} C_1.
\end{align}

Therefore, by \eqref{evo}, \eqref{lll2} and \eqref{third term}, almost surely, the following hold
\begin{align*}
\lim_{t\diverge}\frac{1}{t^2}\psi_t^i(\theta, \theta^*)
\le-\frac{1}{2}\beta^{\chi (n-\phi)} C_1.
\end{align*}
Therefore, we have $\psi_t^i(\theta, \theta^*) \toas -\infty$ and $\mu_t^i(\theta) \toas 0$ for $i\in \calN$ and $\theta\not=\theta^*$, proving Theorem \ref{almost sure}.

\eproof
\end{proof}

We now present the proof of our key lemma -- Lemma \ref{second term goal}.
\begin{proof}[Proof of Lemma \ref{second term goal}]
By \eqref{b1}, we have
\begin{align*}
\left |\calL^i_r(\theta, \theta^*)\right |=\left |\log \frac{\ell_i(s_{t}^i|\theta)}{\ell_i(s_{t}^i|\theta^*)}\right |\le \max_{i\in \calV} \max_{\theta_1, \theta_2 \in \Theta; \theta_1\not= \theta_2}
\max_{w_i\in \calS_i} \left |\log \frac{\ell_i(w_i|\theta_1)}{\ell_i(w_i|\theta_2)}\right |.
\end{align*}
Note that $\max_{i\in \calV} \max_{\theta_1, \theta_2 \in \Theta; \theta_1\not= \theta_2}
\max_{w_i\in \calS_i}  \left |\log \frac{\ell_i(w_i|\theta_1)}{\ell_i(w_i|\theta_2)}\right |$ is symmetric in $\theta_1$ and $\theta_2$. Thus,
\begin{align}
\label{d2}
\nonumber
\left |\calL^i_r(\theta, \theta^*)\right |&\le \max_{i\in \calV} \max_{\theta_1, \theta_2 \in \Theta; \theta_1\not= \theta_2}
\max_{w_i\in \calS_i}  \left |\log \frac{\ell_i(w_i|\theta_1)}{\ell_i(w_i|\theta_2)}\right |=\max_{i\in \calV} \max_{\theta_1, \theta_2 \in \Theta; \theta_1\not= \theta_2}
\max_{w_i\in \calS_i}  \log \frac{\ell_i(w_i|\theta_1)}{\ell_i(w_i|\theta_2)}\\
\nonumber
&=\max_{i\in \calV} \max_{\theta_1, \theta_2 \in \Theta; \theta_1\not= \theta_2}
\max_{w_i\in \calS_i}  -\log \frac{\ell_i(w_i|\theta_2)}{\ell_i(w_i|\theta_1)}\\
&=-\min_{i\in \calV} \min_{\theta_1, \theta_2 \in \Theta; \theta_1\not= \theta_2}
\min_{w_i\in \calS_i} \log \frac{\ell_i(w_i|\theta_2)}{\ell_i(w_i|\theta_1)}=-(-C_0)=C_0<\infty.
\end{align}
Thus, adding and subtracting $\frac{1}{t^2}\sum_{r=1}^t \sum_{j=1}^{n-\phi} \pi_j(r+1)\sum_{k=1}^r \calL_k^j(\theta, \theta^*)$ from the first term on the right hand side of \eqref{evo}, we can get
\begin{align}
\label{ltbb}
\nonumber
&\quad\frac{1}{t^2} \sum_{r=1}^{t}\pth{\sum_{j=1}^{n-\phi}{\bf \Phi}_{ij}(t,r+1)\sum_{k=1}^r\calL^j_k(\theta, \theta^*)-  \pi_j(r+1)r\sum_{j=1}^{n-\phi}H_j(\theta, \theta^*)}\\
\nonumber
&=\frac{1}{t^2} \sum_{r=1}^{t}\sum_{j=1}^{n-\phi}\pth{{\bf \Phi}_{ij}(t,r+1)- \pi_j(r+1)}\sum_{k=1}^r\calL^j_k(\theta, \theta^*)\\
&\quad + \frac{1}{t^2} \sum_{r=1}^{t}\sum_{j=1}^{n-\phi}  \pi_j(r+1)\pth{ \sum_{k=1}^r \calL_k^j(\theta, \theta^*)-rH_j(\theta, \theta^*)}.
\end{align}
For the first term of the right hand side of \eqref{ltbb}, we have
\begin{align}
\label{lt1}
\nonumber
\quad&\frac{1}{t^2} \left |\sum_{r=1}^{t}\sum_{j=1}^{n-\phi}\pth{{\bf \Phi}_{ij}(t,r+1)- \pi_j(r+1)}\sum_{k=1}^r\calL^j_k(\theta, \theta^*)\right |\\
&\le \frac{1}{t^2}\sum_{r=1}^{t}\sum_{j=1}^{n-\phi}\left |{\bf \Phi}_{ij}(t,r+1)- \pi_j(r+1)\right |\sum_{k=1}^r \left |\calL^j_k(\theta, \theta^*) \right |\\
\nonumber
&\le \frac{1}{t^2}\sum_{r=1}^{t}\sum_{j=1}^{n-\phi}\left |{\bf \Phi}_{ij}(t,r+1)- \pi_j(r+1)\right |r C_0~~~~\text{by \eqref{d2}}\\
\nonumber
&\le \frac{1}{t^2}\sum_{r=1}^{t}\sum_{j=1}^{n-\phi} (1-\beta^{\nu})^{\lceil\frac{t-r}{\nu}\rceil}r C_0~~~~\text{by Theorem \ref{convergencerate}}\\
\nonumber
&\le \frac{1}{t^2} \pth{t(n-\phi)C_0}\sum_{r=1}^{t}(1-\beta^{\nu})^{\lceil\frac{t-r}{\nu}\rceil} \\
&\le \frac{(n-\phi)C_0}{(1-\beta^{\nu})(1-(1-\beta^{\nu})^{\frac{1}{\nu}})t} .
\end{align}

Thus, for every sample path, we have
$$ \frac{1}{t^2} \sum_{r=1}^{t}\sum_{j=1}^{n-\phi}\pth{{\bf \Phi}_{ij}(t,r+1)- \pi_j(r+1)}\sum_{k=1}^r\calL^j_k(\theta, \theta^*) \to 0.$$

For the second term of the right hand side of \eqref{ltbb}, we will show that
$$\frac{1}{t^2} \sum_{r=1}^{t}\sum_{j=1}^{n-\phi}  \pi_j(r+1)\pth{ \sum_{k=1}^r \calL_k^j(\theta, \theta^*)-rH_j(\theta, \theta^*)} \toas 0,$$
i.e., almost surely for any $\epsilon>0$ there exists sufficiently large $t(\epsilon)$ such that $\forall \, t\ge t(\epsilon)$,
\begin{align}
\label{22}
\frac{1}{t^2} \left |\sum_{r=1}^{t}\sum_{j=1}^{n-\phi}  \pi_j(r+1)\pth{ \sum_{k=1}^r \calL_k^j(\theta, \theta^*)-rH_j(\theta, \theta^*)} \right |~\le ~\epsilon.
\end{align}
We prove this by dividing $r$ into two ranges $r\in \{1, \cdots, \sqrt{t}\}$ and $r\in \{\sqrt{t}+1, \cdots, t\}$, i.e.,
\begin{align}
\label{lll}
\nonumber
&\frac{1}{t^2}\sum_{r=1}^{t} \sum_{j=1}^{n-\phi} \pi_j(r+1)\pth{\sum_{k=1}^r \calL_k^j(\theta, \theta^*)-rH_j(\theta, \theta^*)}\\
\nonumber
&=\frac{1}{t^2}\sum_{r=1}^{\sqrt{t}} \sum_{j=1}^{n-\phi} \pi_j(r+1)\pth{\sum_{k=1}^r \calL_k^j(\theta, \theta^*)-rH_j(\theta, \theta^*)}\\
&\quad+\frac{1}{t^2}\sum_{r=\sqrt{t}+1}^t \sum_{j=1}^{n-\phi} \pi_j(r+1)\pth{\sum_{k=1}^r \calL_k^j(\theta, \theta^*)-rH_j(\theta, \theta^*)}.
\end{align}
For the first term of the right hand side of \eqref{lll}, we have
\begin{align*}
&\frac{1}{t^2}\left |\sum_{r=1}^{\sqrt{t}} \sum_{j=1}^{n-\phi} \pi_j(r+1)\pth{\sum_{k=1}^r \calL_k^j(\theta, \theta^*)-rH_j(\theta, \theta^*)}\right |\\
&\le \frac{1}{t^2} \sum_{r=1}^{\sqrt{t}} \sum_{j=1}^{n-\phi}\pi_j(r+1)\pth{2rC_0}~~~\text{by \eqref{expected} and \eqref{d2}}\\
&=\frac{1}{t^2} \pth{2C_0}  \sum_{r=1}^{\sqrt{t}}r\\
&\le C_0\pth{\frac{1}{t}+\frac{1}{t^{\frac{3}{2}}}}.
\end{align*}
Thus, there exists $t_1(\epsilon)$ such that for all $t\ge t_1(\epsilon)$, it holds that
\begin{align*}
\frac{1}{t^2}\left |\sum_{r=1}^{\sqrt{t}} \sum_{j=1}^{n-\phi}\pi_j(r+1)\pth{\sum_{k=1}^r \calL_k^j(\theta, \theta^*)-rH_j(\theta, \theta^*)}\right | \le \frac{\epsilon}{2}.
\end{align*}
For the second term of the right hand side of \eqref{lll}, we have
\begin{align*}
\quad&\frac{1}{t^2}\sum_{r=\sqrt{t}+1}^{t} \sum_{j=1}^{n-\phi}\pi_j(r+1)\pth{\sum_{k=1}^r \calL_k^j(\theta, \theta^*)-rH_j(\theta, \theta^*)}\\
&=\frac{1}{t}\sum_{r=\sqrt{t}+1}^{t} \sum_{j=1}^{n-\phi}\pi_j(r+1)\frac{r}{t}\pth{\frac{1}{r}\sum_{k=1}^r \calL_k^j(\theta, \theta^*)-H_j(\theta, \theta^*)}
\end{align*}

Since $\calL_k^j(\theta, \theta^*)$'s are i.i.d., from Strong LLN, we know that $\frac{1}{r}\sum_{k=1}^r \calL_k^j(\theta, \theta^*)-H_j(\theta, \theta^*)\toas 0$. That is, with probability 1, the sample path converges. Now, focus on each convergent sample path. For sufficiently large $r(\epsilon)$, it holds that for any $r\ge r(\epsilon)$,
$$\left| \frac{1}{r}\sum_{k=1}^r \calL_k^j(\theta, \theta^*)-H_j(\theta, \theta^*)\right | \le \frac{\epsilon}{2}.$$
Recall that $r\ge \sqrt{t}$. Thus, we know that there exists sufficiently large $t_2(\epsilon)$ such that $\forall \, t\ge t_2(\epsilon)$, $r\ge \sqrt{t}$ is large enough and
$$ \left |\frac{1}{r}\sum_{k=1}^r \calL_k^j(\theta, \theta^*)-H_j(\theta, \theta^*)\right |\le \frac{\epsilon}{2}.$$
Then, we have $\forall \, t\ge t_2(\epsilon)$,
\begin{align*}
&\frac{1}{t^2}\left |\sum_{r=\sqrt{t}+1}^{t} \sum_{j=1}^{n-\phi}\pi_j(r+1)\pth{\sum_{k=1}^r \calL_k^j(\theta, \theta^*)-rH_j(\theta, \theta^*)}\right |\\
&=\frac{1}{t}\sum_{r=\sqrt{t}+1}^{t} \sum_{j=1}^{n-\phi}\pi_j(r+1)\frac{r}{t}\left |\frac{1}{r}\sum_{k=1}^r \calL_k^j(\theta, \theta^*)-H_j(\theta, \theta^*)\right |\\
&\le \frac{1}{t}\sum_{r=\sqrt{t}+1}^{t} \sum_{j=1}^{n-\phi}\pi_j(r+1)\frac{r}{t} \frac{\epsilon}{2}\\
&=\frac{1}{t}\sum_{r=\sqrt{t}+1}^{t} \frac{r}{t} \frac{\epsilon}{2}=\frac{\epsilon}{2}\frac{1}{t^2} \sum_{r=\sqrt{t}+1}^{t}r \\
&=\frac{\epsilon}{4}\frac{1}{t^2} \pth{t^2-\sqrt{t}}\le ~\frac{\epsilon}{2}.
\end{align*}

Therefore, for any $\epsilon>0$, there exists $\max \{t_1(\epsilon), t_2(\epsilon)\}$, such that for any $t\ge \max \{t_1(\epsilon), t_2(\epsilon)\}$,
\begin{align*}
\frac{1}{t^2}\left |\sum_{r=1}^{t} \sum_{j=1}^{n-\phi}\pi_j(r+1)\pth{\sum_{k=1}^r \calL_k^j(\theta, \theta^*)-rH_j(\theta, \theta^*)}\right | ~\le ~\epsilon,
\end{align*}
for every convergent sample path.
In addition, we know a sample path is convergent with probability 1. Thus \eqref{22} holds almost surely.

Therefore, Lemma \ref{second term goal} is proved.

\eproof
\end{proof}

\section{BFL in the absence of Byzantine Agents, i.e., $f=0$}
\label{failure-free}

In this section, we present BFL for the special case in the absence of Byzantine agents, i.e., $f=0$, named Failure-free BFL. Since $f=0$, all the agents in the network are cooperative, and no trimming is needed. 
Indeed, the BFL for $f=0$ is a simple modification of the algorithm proposed in \cite{nedic2014nonasymptotic}.
\begin{algorithm}

\caption{Failure-free BFL}
\label{alg: failure-free}
{\normalsize
 \vskip 0.2\baselineskip
 Transmit current belief vector $\mu_{t-1}^i$ on all outgoing edges\;
 \vskip 0.2\baselineskip
 Wait until a private signal $s_t^i$ is observed and belief vectors are received from all incoming neighbors $\calI_i$\;
 %
  \vskip 0.2\baselineskip
\For{$\theta\in \Theta$}
{$\mu_{t}^i(\theta)\gets \frac{\ell_i(s_{1, t}^i|\theta)\prod_{j\in \calI_i\cup \{i\}} \mu_{t-1}^j(\theta)^{\frac{1}{|\calI_i|+1}}}{\sum_{p=1}^m \ell_i(s_{1, t}^i|\theta)\prod_{j\in \calI_i\cup \{i\}} \mu_{t-1}^j(\theta)^{\frac{1}{|\calI_i|+1}}}.$}
}
\end{algorithm}
%


For each time $t\ge 1$, we define a matrix that follows the structure of $G(\calV, \calE)$ as follows: 
\begin{align}
\label{matrix 1}
{\bf A}_{ij}\triangleq
\begin{cases}
\frac{1}{|\calI_i|+1}, & j\in \calI_i\cup \{i\}\\
0, & \text{otherwise. }
\end{cases}
\end{align}
Thus, the dynamic of $\psi_t^i (\theta, \theta^*) $ (defined in \eqref{b1}) under Algorithm \ref{alg: failure-free} can be written as
\begin{align*}
\psi_t^i (\theta, \theta^*) &= \log \frac{\mu_t^i(\theta)}{\mu_t^i(\theta^*)}\\
&=\log \frac{\ell_i(s_{1, t}^i|\theta)\prod_{j\in \calI_i\cup \{i\}} \mu_{t-1}^j(\theta)^{\frac{1}{|\calI_i|+1}}}{\ell_i(s_{1, t}^i|\theta^*)\prod_{j\in \calI_i\cup \{i\}} \mu_{t-1}^j(\theta^*)^{\frac{1}{|\calI_i|+1}}}\\
&=\log \prod_{j\in \calI_i \cup \{i\}}\qth{\frac{\mu_{t-1}^j(\theta)}{\mu_{t-1}^j(\theta^*)}}^{\frac{1}{|\calI_i|+1}}+\log \frac{\ell_i(s_{1, t}^i\mid \theta)}{\ell_i(s_{1, t}^i\mid \theta^*)}\\
&=\log \prod_{j\in \calI_i \cup \{i\}}\qth{\frac{\mu_{t-1}^j(\theta)}{\mu_{t-1}^j(\theta^*)}}^{\frac{1}{|\calI_i|+1}}+\sum_{r=1}^t\log \frac{\ell_i(s_{r}^i\mid \theta)}{\ell_i(s_{r}^i\mid \theta^*)}\\
&=\sum_{j=1}^n {\bf A}_{ij}\psi_{t-1}^i (\theta, \theta^*)+\sum_{r=1}^t \calL_r^i(\theta, \theta^*) ~~~\text{by \eqref{b1} and \eqref{matrix 1}}
\end{align*}

Recall that $\bm{\psi}_{t} (\theta, \theta^*)\in \reals^{n-\phi}$ is the vector that stacks $\psi_{t-1}^i (\theta, \theta^*)$ with the $i$--th entry being $\psi_{t-1}^i (\theta, \theta^*)$ for all $i\in \calN$. Since $f=0$, i.e., the network is free of failures, it holds that
$$ 0\le \phi=|\calF|\le f=0.$$
Thus, $\bm{\psi}_{t} (\theta, \theta^*)\in \reals^{n}$. Similar to \eqref{int1}, the evolution of $\bm{\psi}_{t} (\theta, \theta^*)$ can be compactly written as follows.
\begin{align}
\label{ff int1}
\nonumber
\bm{\psi}_{t}(\theta, \theta^*)&={\bf A}^t\bm{\psi}_0(\theta, \theta^*)+\sum_{r=1}^{t} {\bf A}^{t-r}\sum_{k=1}^r\calL_k(\theta, \theta^*)\\
&=\sum_{r=1}^{t} {\bf A}^{t-r}\sum_{k=1}^r\calL_k(\theta, \theta^*).
\end{align}
The last equality holds from the fact that $\bm{\psi}_0(\theta, \theta^*)=\zeros$.

As mentioned before, the non-Bayesian learning rules \cite{nedic2014nonasymptotic,rad2010distributed,Lalitha2014,shahrampour2014distributed} are consensus-based learning algorithms, wherein agents are required to reach a common decision asymptotically.
\begin{assumption}
\label{graph failure free}
The underlying communication network $G(\calV, \calE)$ is strongly connected.
\end{assumption}
It is easy to see that $G(\calV, \calE)$ itself is the only reduced graph of $G(\calV, \calE)$, and that Assumption \ref{graph failure free} is the special case of Assumption \ref{sufficient} when $f=0$. Thus,
$$ \chi_m=1, ~~~~~\text{and } ~~~~ \nu_m=\chi_m(n-\phi)=n.$$
Note that both $\chi_m$ and $\nu_m$ are independent of $m$ when $f=0$. Henceforth in this section, we drop the subscripts of $\chi_m$ and $\nu_m$ for ease of notation.

Similar to \eqref{mixing}, for any $r\ge 1$, we get
\begin{align*}
\lim_{t\ge r, ~ t\diverge} {\bf A}^{t-r} = \ones \bm{\pi}.
\end{align*}
Since ${\bf A}$ is time-invariant, the product limit $\lim_{t\ge r, ~ t\diverge} {\bf A}^{t-r} $ is also independent of $r$.

It is easy to see that
$$ {\bf A} \ge \frac{1}{n} {\bf H},$$
where ${\bf H}$ is the adjacency matrix of the communication graph $G(\calV, \calE)$, and that
\begin{align}
\label{ff ll}
 \pi_j \ge \frac{1}{n^n}, ~~~\forall\, j=1, \cdots, n.
 \end{align}

The following corollary is a direct consequence of Theorem \ref{convergencerate}, and its proof is omitted.
\begin{corollary}
\label{ff convergencerate}
For all $t\ge r\ge 1$, it holds that
$\left | [{\bf A}^{t-r}]_{ij}-\pi_j\right |\le (1-\frac{1}{n^n})^{\lceil\frac{t-r}{n}\rceil},$
where $[{\bf A}^{t-r}]_{ij}$ is the $i,j$--th entry of matrix ${\bf A}^{t-r}$.
\end{corollary}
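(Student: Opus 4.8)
The plan is to specialize Theorem~\ref{convergencerate} to the failure-free regime; the statement is an immediate consequence once a few simplifications are recorded. First I would note that when $f=0$ we have $\phi=|\calF|=0$, so every update matrix ${\bf A}[t]$ equals the fixed matrix ${\bf A}$ of \eqref{matrix 1}. Hence the backward product degenerates to a power, ${\bf \Phi}(t,r)={\bf A}[t]\cdots{\bf A}[r]={\bf A}^{\,t-r+1}$, and the limiting row-stochastic vector is independent of $r$, that is, $\pi_j(r)=\pi_j$. Moreover, by Assumption~\ref{graph failure free} the graph $G(\calV,\calE)$ is its own unique reduced graph, so $\chi=1$ and $\nu=\chi(n-\phi)=n$; and since ${\bf A}\ge\frac1n{\bf H}$ one may take the constant $\beta$ of Lemma~\ref{matrix lemma} to be $1/n$, so that $1-\beta^{\nu}=1-\frac{1}{n^{n}}$.

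With these substitutions, the bound of Theorem~\ref{convergencerate} reads, for all $t\ge r\ge 1$,
\begin{align*}
\left|[{\bf A}^{\,t-r+1}]_{ij}-\pi_j\right|\le\left(1-\frac{1}{n^{n}}\right)^{\lceil (t-r+1)/n\rceil}.
\end{align*}
Writing $s=t-r+1$, this says exactly that $\left|[{\bf A}^{\,s}]_{ij}-\pi_j\right|\le\left(1-\frac{1}{n^{n}}\right)^{\lceil s/n\rceil}$ for every integer $s\ge 1$. Applying it with $s=t-r$, which is legitimate whenever $t-r\ge 1$, yields the claimed inequality for all $t>r\ge 1$.

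It remains to treat the boundary case $t=r$, which is trivial: the right-hand side of the corollary equals $\left(1-\frac{1}{n^{n}}\right)^{0}=1$, while ${\bf A}^{0}={\bf I}$ has entries in $\{0,1\}$ and $\pi_j\in[0,1]$, so $\left|[{\bf I}]_{ij}-\pi_j\right|\le 1$. I do not anticipate any genuine difficulty here --- the corollary is quantitatively just a restatement of Theorem~\ref{convergencerate} --- and the only step meriting care is the reindexing between the product of $t-r+1$ matrices appearing in that theorem and the $(t-r)$-th power appearing in the corollary, together with the (vacuous) $t=r$ boundary check.
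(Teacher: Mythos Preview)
Your proposal is correct and matches the paper's own treatment: the paper states that Corollary~\ref{ff convergencerate} is a direct consequence of Theorem~\ref{convergencerate} and omits the proof, and your specialization (fixing $\phi=0$, ${\bf A}[t]\equiv{\bf A}$, $\chi=1$, $\nu=n$, $\beta=1/n$) together with the reindexing $s=t-r+1\mapsto t-r$ and the trivial $t=r$ boundary check is exactly the intended derivation.
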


In addition, when $f=0$, Assumption \ref{ass} becomes
\begin{assumption}
\label{ass failure-free}
Suppose that Assumption \ref{graph failure free} holds. For any $\theta\not=\theta^*,$ the following holds
\begin{align}
\label{failure-free identify}
\sum_{j=1}^m  D\pth{\ell_j(\cdot |\theta^*)\parallel\ell_j(\cdot |\theta)}~\not=~0.
\end{align}
\end{assumption}

As an immediate consequence of Theorem \ref{almost sure}, we have the following corollary.
\begin{corollary}
\label{almost sure ff}
When Assumption \ref{ass failure-free} holds, each agent $i$ will concentrate its belief on the true hypothesis $\theta^*$ almost surely, i.e., $\mu_t^i(\theta) \toas 0$ for all $\theta\not= \theta^*$.
\end{corollary}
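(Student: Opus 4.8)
The plan is to obtain Corollary~\ref{almost sure ff} as a direct specialization of Theorem~\ref{almost sure}, so that the entire task reduces to checking that the failure-free hypotheses imply those of the general theorem. First I would record that $f=0$ forces $\phi=|\calF|=0$, hence $\calN=\calV$ and every one of the $n$ agents runs Algorithm~\ref{alg: failure-free}; moreover, by Definition~\ref{reduced graph} the only $m$--dimensional reduced graph of $G(\calV,\calE)$ is $G$ itself, since there are no faulty nodes to delete and $mf=0$ extra incoming edges to remove. Thus $\chi_m=1$ and $\calC_m=\{G\}$. Under Assumption~\ref{graph failure free}, $G$ is strongly connected, so $G$ is its own unique source component (it trivially receives no incoming links from outside itself), which establishes Assumption~\ref{sufficient} with $\calS_{\calH}=\calV$ and $\gamma_m=n$.

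Next I would line up the identifiability and matrix-representation ingredients. With $\calS_{\calH}=\calV$, condition~\eqref{failure identify} of Assumption~\ref{ass} reads $\sum_{j\in\calV} D(\ell_j(\cdot|\theta^*)\parallel\ell_j(\cdot|\theta))\neq 0$, which coincides with~\eqref{failure-free identify}; hence Assumption~\ref{ass failure-free} implies Assumption~\ref{ass} for $m=|\Theta|$. For the dynamics, the matrix ${\bf A}$ of~\eqref{matrix 1} is row stochastic, time-invariant, and deterministic, and satisfies ${\bf A}\ge \frac1n{\bf H}$ with ${\bf H}$ the adjacency matrix of $G$; this is exactly the (degenerate) instance of Lemma~\ref{matrix lemma} with $\calH_m[t]=G$ and $\beta_m=1/n$ for every $t$, and the belief recursion~\eqref{ff int1} derived in this section coincides with the expansion~\eqref{int1} used in the proof of Theorem~\ref{almost sure}. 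Likewise Corollary~\ref{ff convergencerate} is the specialization of Theorem~\ref{convergencerate}, and~\eqref{ff ll} is the specialization of Lemma~\ref{lblimiting}.

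Having matched all hypotheses, I would simply invoke Theorem~\ref{almost sure}: since Assumption~\ref{ass} holds, $\mu_t^i(\theta)\toas 0$ for every $i\in\calN=\calV$ and every $\theta\neq\theta^*$, and normalization of the belief vector then gives $\mu_t^i(\theta^*)\toas 1$. Alternatively, one can reproduce the short argument in the proof of Theorem~\ref{almost sure} verbatim with $n-\phi$ replaced by $n$, ${\bf \Phi}(t,r+1)$ by ${\bf A}^{t-r}$, Theorem~\ref{convergencerate} by Corollary~\ref{ff convergencerate}, and Lemma~\ref{lblimiting} by~\eqref{ff ll}, which yields $\frac1{t^2}\psi_t^i(\theta,\theta^*)\le -\frac12 (1/n)^n C_1$ almost surely, hence $\psi_t^i(\theta,\theta^*)\toas-\infty$.

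There is no genuinely hard step here; the only point demanding care is confirming that the \emph{deterministic, time-invariant} matrix ${\bf A}$ legitimately fits the framework of Lemma~\ref{matrix lemma}, which was stated for the generally random, time-varying matrices ${\bf A}[t]$ produced by {\em Byz-Iter}. This is immediate, because a constant sequence ${\bf A}[t]\equiv{\bf A}$ with ${\bf A}\ge\frac1n{\bf H}$ trivially meets every requirement, and the positive constant $\beta^{\chi(n-\phi)}=(1/n)^n$ keeps $C_1>0$, so the negative-drift bound~\eqref{third term} carries over unchanged.
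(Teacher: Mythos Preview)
Your proposal is correct and matches the paper's own treatment: the paper simply states that Corollary~\ref{almost sure ff} is the $f=0$ special case of Theorem~\ref{almost sure} and omits the proof. Your careful verification that Assumption~\ref{ass failure-free} implies Assumption~\ref{ass} (via $\calC_m=\{G\}$ and $\calS_{\calH}=\calV$) and that the deterministic matrix ${\bf A}$ fits the framework of Lemma~\ref{matrix lemma} spells out exactly the reduction the paper leaves implicit.
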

Since Corollary \ref{almost sure ff} is the special case of Theorem \ref{almost sure} for $f=0$, the proof of Corollary \ref{almost sure ff} is omitted.

\subsection{Finite-Time Analysis of Failure-Free BFL}

In this subsection, we present the convergence rate that is achievable in finite time with high probability. Our proof is similar to the proof presented in \cite{nedic2014nonasymptotic,shahrampour2014distributed}.
\begin{lemma}
\label{expect}
Let $\lambda\triangleq \pth{1-(\frac{1}{n})^n}^{\frac{1}{n}}$, and let $\theta\not=\theta^*$, and consider $\psi_t^i(\theta, \theta^*)$ as defined in \eqref{b1}. Then, for each agent $i$ we have
$$\expect{\psi_t^i(\theta, \theta^*)} \le \frac{nC_0}{(1-\frac{1}{n^n})(1-\lambda)}t  -\frac{C_1}{2n^n}  t^2. $$
\end{lemma}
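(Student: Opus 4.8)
The plan is to take expectations directly in the matrix representation \eqref{ff int1} and then reuse the two–term (transient plus drift) decomposition from the proof of Lemma \ref{second term goal}, except that here we keep the deterministic negative drift coming from identifiability instead of discarding it. From \eqref{ff int1} the $i$-th coordinate is $\psi_t^i(\theta,\theta^*)=\sum_{r=1}^t\sum_{j=1}^n [{\bf A}^{t-r}]_{ij}\sum_{k=1}^r \calL_k^j(\theta,\theta^*)$. Since the observations are i.i.d.\ under $\theta^*$ and $\expect{\calL_k^j(\theta,\theta^*)}=H_j(\theta,\theta^*)$ by \eqref{expected}, linearity of expectation gives $\expect{\psi_t^i(\theta,\theta^*)}=\sum_{r=1}^t r\sum_{j=1}^n [{\bf A}^{t-r}]_{ij}H_j(\theta,\theta^*)$. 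Adding and subtracting the limiting weights $\pi_j$ splits this into a transient term $\sum_{r=1}^t r\sum_{j=1}^n([{\bf A}^{t-r}]_{ij}-\pi_j)H_j(\theta,\theta^*)$ and a drift term $\sum_{r=1}^t r\sum_{j=1}^n \pi_j H_j(\theta,\theta^*)$.

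For the transient term I would use $|H_j(\theta,\theta^*)|\le C_0$ from \eqref{finite of H} together with the mixing estimate $|[{\bf A}^{t-r}]_{ij}-\pi_j|\le (1-\tfrac{1}{n^n})^{\lceil (t-r)/n\rceil}$ of Corollary \ref{ff convergencerate}, bound $r\le t$, and sum the resulting series exactly as in \eqref{lt1} with $\beta=1/n$ and $\nu=n$, namely $\sum_{r=1}^t (1-\tfrac{1}{n^n})^{\lceil(t-r)/n\rceil}\le \frac{1}{(1-\frac{1}{n^n})(1-\lambda)}$ with $\lambda=(1-(1/n)^n)^{1/n}$. Since the summand does not depend on $j$, the inner sum over $j$ just contributes a factor $n$, so the absolute value of the transient term is at most $\frac{nC_0}{(1-\frac{1}{n^n})(1-\lambda)}\,t$, which is the first term in the claimed bound.

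For the drift term I would use that $H_j(\theta,\theta^*)=-D(\ell_j(\cdot|\theta^*)\parallel\ell_j(\cdot|\theta))\le 0$ together with $\pi_j\ge 1/n^n$ from \eqref{ff ll}: because $H_j\le0$ we have $(\pi_j-\tfrac1{n^n})H_j\le0$, hence $\sum_{j=1}^n \pi_j H_j(\theta,\theta^*)\le \frac{1}{n^n}\sum_{j=1}^n H_j(\theta,\theta^*)=-\frac{1}{n^n}\sum_{j=1}^n D(\ell_j(\cdot|\theta^*)\parallel\ell_j(\cdot|\theta))\le -\frac{C_1}{n^n}$. The last inequality holds because when $f=0$ the graph $G(\calV,\calE)$ is its own unique reduced graph and its source component is all of $\calV=\{1,\dots,n\}$, so the minimum defining $C_1$ in \eqref{c1} equals $\min_{\theta\ne\theta^*}\sum_{j\in\calV}D(\ell_j(\cdot|\theta^*)\parallel\ell_j(\cdot|\theta))$, which is exactly the hypothesis ensured by Assumption \ref{ass failure-free}. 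Therefore $\sum_{r=1}^t r\sum_{j=1}^n \pi_j H_j(\theta,\theta^*)\le -\frac{C_1}{n^n}\sum_{r=1}^t r=-\frac{C_1}{n^n}\cdot\frac{t(t+1)}{2}\le -\frac{C_1}{2n^n}t^2$. Combining the two bounds gives $\expect{\psi_t^i(\theta,\theta^*)}\le \frac{nC_0}{(1-\frac{1}{n^n})(1-\lambda)}t-\frac{C_1}{2n^n}t^2$.

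The computation is essentially routine once the decomposition is in place; the only points needing care are getting the signs right in the drift term (the KL divergences enter with a minus sign, so $\pi_j\ge 1/n^n$ must be combined with $H_j\le0$ in the direction that \emph{increases} the bound) and verifying that for $f=0$ the identifiability constant $C_1$ of \eqref{c1} collapses to a sum over all nodes, which is precisely what lets us lower bound $-\sum_j H_j(\theta,\theta^*)$ by $C_1$. I do not expect any substantial obstacle beyond this bookkeeping and the (already-established) geometric-sum estimate from \eqref{lt1}.
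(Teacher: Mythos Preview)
Your proposal is correct and follows essentially the same approach as the paper's own proof: take expectations in \eqref{ff int1}, split off the stationary weights $\pi_j$, bound the transient term via Corollary~\ref{ff convergencerate} together with $|H_j|\le C_0$ and $r\le t$, and bound the drift term using $\pi_j\ge 1/n^n$, $H_j\le 0$, and the observation that for $f=0$ the constant $C_1$ reduces to $\min_{\theta\ne\theta^*}\sum_{j=1}^n D(\ell_j(\cdot|\theta^*)\parallel\ell_j(\cdot|\theta))$. The paper's argument is identical in structure and in the individual estimates.
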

\begin{proof}

By \eqref{ff int1}, we have $\psi_{t}^i (\theta, \theta^*)=\sum_{r=1}^t \sum_{j=1}^n [{\bf A}^{t-r}]_{ij} \sum_{k=1}^r \calL_k^j(\theta, \theta^*). $
Taking expectation of $\psi_{t}^i (\theta, \theta^*)$ with respect to $\ell^i(\cdot \mid \theta^*)$, we get
\begin{align}
\label{ff exp}
\nonumber
\mathbb{E}^*\qth{\bm{\psi}^i_{t}(\theta, \theta^*)}&=\mathbb{E}^*\qth{
\sum_{r=1}^t \sum_{j=1}^n [{\bf A}^{t-r}]_{ij} \sum_{k=1}^r \calL_k^j(\theta, \theta^*)}\\
\nonumber
&=\sum_{r=1}^t \sum_{j=1}^n [{\bf A}^{t-r}]_{ij} \sum_{k=1}^r \mathbb{E}^*\qth{\calL_k^j(\theta, \theta^*)}\\
\nonumber
&=\sum_{r=1}^t \sum_{j=1}^n [{\bf A}^{t-r}]_{ij} r H_j(\theta, \theta^*)~~~\text{by ~~\eqref{expected}}\\
&=\sum_{r=1}^t \sum_{j=1}^n \pth{[{\bf A}^{t-r}]_{ij}-\pi_j} r H_j(\theta, \theta^*) +\sum_{r=1}^t \sum_{j=1}^n \pi_j r H_j(\theta, \theta^*).
\end{align}

For the first term in the right hand side of \eqref{ff exp}, we have
\begin{align}
\label{ff exp 1}
\nonumber
\sum_{r=1}^t \sum_{j=1}^n \pth{[{\bf A}^{t-r}]_{ij}-\pi_j} r H_j(\theta, \theta^*)&\le \sum_{r=1}^t \sum_{j=1}^n \left |[{\bf A}^{t-r}]_{ij}-\pi_j\right | r \left |H_j(\theta, \theta^*)\right |\\
\nonumber
&\le \sum_{r=1}^t \sum_{j=1}^n \qth{1-\frac{1}{n^n}}^{\lceil \frac{t-r}{n}\rceil} r C_0 ~~~\text{by Corollary \ref{ff convergencerate}, and  \eqref{c0}}\\
\nonumber
&=n C_0 \sum_{r=1}^t  \qth{1-\frac{1}{n^n}}^{\lceil \frac{t-r}{n}\rceil} r\\
&\le \frac{nC_0}{(1-\frac{1}{n^n})(1-\lambda)}t.
\end{align}

Since $G(\calV, \calE)$ is the only source component, $C_1$ (defined in \eqref{c1}) becomes
$$C_1 = \min_{\theta, \theta^* \in \Theta; \theta\not= \theta^*} \sum_{i=1}^n D(\ell_i(\cdot | \theta^*) \parallel \ell_i(\cdot | \theta)).$$
Thus, for the second term in the right hand side of \eqref{ff exp}, we get
\begin{align}
\label{ff exp 2}
\nonumber
\sum_{r=1}^t \sum_{j=1}^n \pi_j r H_j(\theta, \theta^*) &\le \sum_{r=1}^t \sum_{j=1}^n \frac{1}{n^n} r H_j(\theta, \theta^*)~~~\text{by \eqref{ff ll} and \eqref{expected}}\\
\nonumber
&=\frac{1}{n^n}\sum_{r=1}^t r\sum_{j=1}^n  H_j(\theta, \theta^*)\\
\nonumber
&\le -\frac{1}{n^n}\sum_{r=1}^t r C_1 \\
&\le -\frac{C_1}{2n^n}  t^2.
\end{align}

By \eqref{ff exp 1} and \eqref{ff exp 2}, \eqref{ff exp} becomes
\begin{align}
\nonumber
\mathbb{E}^*\qth{\bm{\psi}^i_{t}(\theta, \theta^*)}&=\sum_{r=1}^t \sum_{j=1}^n \pth{[{\bf A}^{t-r}]_{ij}-\pi_j} r H_j(\theta, \theta^*) +\sum_{r=1}^t \sum_{j=1}^n \pi_j r H_j(\theta, \theta^*)\\
&\le \frac{nC_0}{(1-\frac{1}{n^n})(1-\lambda)}t  - \frac{C_1}{2n^n}  t^2,
\end{align}
proving the lemma.

\eproof
\end{proof}

Similar to \cite{nedic2014nonasymptotic,shahrampour2014distributed}, we also use McDiarmid's Inequality.
\begin{theorem}[McDiarmid's Inequality]
\label{McInequality}
Let $X_1, \cdots, X_t$ be independent random variables and consider the mapping $H: \calX^t\to \reals$. If for $r=1, \cdots, t$, and every sample $x_1, \cdots, x_t$, $x_r^{\prime}\in \calX$, the function $H$ satisfies
$$ \left | H(x_1, \cdots, x_r, \cdots, x_t)-H(x_1, \cdots, x_r^{\prime}, \cdots, x_t)\right |\le c_r,$$
then for all $\epsilon>0$,
$$ \mathbb{P}\qth{|H(x_1, \cdots, x_t)-\mathbb{E}[H(x_1, \cdots, x_t)]|\ge \epsilon}\le \exp \sth{\frac{-2\epsilon^2}{\sum_{r=1}^t c_r^2}}.$$
\end{theorem}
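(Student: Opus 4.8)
The plan is to prove this classical concentration bound by the martingale method of bounded differences (Azuma--Hoeffding). First I would introduce the filtration $\calF_k \triangleq \sigma(X_1,\dots,X_k)$ for $k = 0,1,\dots,t$, with $\calF_0$ trivial, and the associated Doob martingale $Z_k \triangleq \expect{H(X_1,\dots,X_t) \mid \calF_k}$, so that $Z_0 = \expect{H(X_1,\dots,X_t)}$ and $Z_t = H(X_1,\dots,X_t)$. Setting $D_k \triangleq Z_k - Z_{k-1}$, the martingale property gives $\expect{D_k \mid \calF_{k-1}} = 0$, and telescoping gives $H(X_1,\dots,X_t) - \expect{H(X_1,\dots,X_t)} = \sum_{k=1}^t D_k$.

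Second, I would show that, conditionally on $\calF_{k-1}$, the increment $D_k$ lies in an interval of length at most $c_k$. Using the independence of $X_1,\dots,X_t$, for frozen values $x_1,\dots,x_{k-1}$ one can write $Z_k - Z_{k-1}$ as $g_k(X_k) - \expect{g_k(X_k)}$, where $g_k(x) \triangleq \expect{H(x_1,\dots,x_{k-1},x,X_{k+1},\dots,X_t)}$ integrates out the future coordinates; the hypothesis on $H$ then yields $\sup_x g_k(x) - \inf_x g_k(x) \le c_k$. Hence $D_k \in [L_k,U_k]$ for $\calF_{k-1}$-measurable $L_k \le U_k$ with $U_k - L_k \le c_k$. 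This step is the main obstacle: it is where the independence of the $X_i$ is indispensable, and one has to be careful to express the conditional expectations as honest functions of a single coordinate before invoking the bounded-differences assumption.

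Third, I would apply Hoeffding's lemma: for any random variable $D$ with $\expect{D \mid \calF_{k-1}} = 0$ and $D \in [L_k,U_k]$, $U_k - L_k \le c_k$, one has $\expect{\rexp{\lambda D} \mid \calF_{k-1}} \le \rexp{\lambda^2 c_k^2 / 8}$ for all $\lambda \in \reals$ (a convexity argument bounding the log-moment-generating function of a bounded, mean-zero variable). By the tower property applied from $k=t$ down to $k=1$, $\expect{\rexp{\lambda (Z_t - Z_0)}} \le \rexp{\lambda^2 \sum_{k=1}^t c_k^2 / 8}$. Finally a Chernoff bound gives $\prob{Z_t - Z_0 \ge \epsilon} \le \rexp{-\lambda \epsilon + \lambda^2 \sum_k c_k^2 / 8}$; optimizing with $\lambda = 4\epsilon / \sum_k c_k^2$ yields $\prob{Z_t - Z_0 \ge \epsilon} \le \rexp{-2\epsilon^2 / \sum_{k=1}^t c_k^2}$. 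Since $-H$ satisfies the same bounded-differences condition with the same constants $c_r$, the identical argument bounds $\prob{Z_t - Z_0 \le -\epsilon}$, and a union bound over the two tails produces the stated two-sided inequality.
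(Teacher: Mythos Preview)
The paper does not prove this theorem at all: McDiarmid's inequality is stated as a classical concentration result and then invoked as a black box in the proof of Theorem~\ref{finite time}. So there is no ``paper's own proof'' to compare against.

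Your sketch is the standard martingale proof and is correct in all essentials. One small point: the union bound over the two tails at the end yields $2\exp\sth{-2\epsilon^2/\sum_{r=1}^t c_r^2}$, not $\exp\sth{-2\epsilon^2/\sum_{r=1}^t c_r^2}$ as the theorem is stated; the factor of $2$ is genuinely present in the two-sided bound and the paper's statement is slightly loose in omitting it. This is immaterial for the paper's application, since the subsequent proof of Theorem~\ref{finite time} only uses the one-sided tail $\prob{\psi_t^i(\theta,\theta^*)-\mathbb{E}^*[\psi_t^i(\theta,\theta^*)]\ge \cdot}$, for which your Chernoff step already gives the clean $\exp\sth{-2\epsilon^2/\sum_r c_r^2}$ without any union bound.
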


\begin{theorem}
\label{finite time}
Under Assumption \ref{ass failure-free}, for any $\rho\in (0,1)$, there exists an integer $T(\rho)$ such that with probability $1-\rho$, for all $t\ge T(\rho)$ and for all $\theta\not= \theta^*$, we have
\begin{align*}
\mu_t^i(\theta)\le \exp \pth{ \frac{nC_0}{(1-\frac{1}{n^n})(1-\lambda)}t  -\frac{C_1}{4n^n} t^2 }
\end{align*}
where $C_0$ and $C_1$ are defined in \eqref{c0} and \eqref{c1} respectively, and
$ T(\rho)= \frac{64C_0^2n^{2n}}{3C_1^2}\log \frac{1}{\rho}.$
\end{theorem}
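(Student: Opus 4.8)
The plan is to combine the expectation bound from Lemma~\ref{expect} with McDiarmid's Inequality (Theorem~\ref{McInequality}) applied to the random variable $\psi_t^i(\theta,\theta^*)$, viewed as a function of the independent signal profiles $\mathbf{s}_1,\ldots,\mathbf{s}_t$. First I would fix $\theta\neq\theta^*$ and an agent $i$, and recall from \eqref{ff int1} that $\psi_t^i(\theta,\theta^*)=\sum_{r=1}^t\sum_{j=1}^n [\mathbf{A}^{t-r}]_{ij}\sum_{k=1}^r \calL_k^j(\theta,\theta^*)$. I would check the bounded-differences condition: changing the signal profile $\mathbf{s}_k$ at a single iteration $k$ changes only the terms $\calL_k^j(\theta,\theta^*)$ for $j=1,\ldots,n$, and each $|\calL_k^j(\theta,\theta^*)|\le C_0$ by \eqref{d2}; tracking how $\mathbf{s}_k$ enters $\psi_t^i$ (it appears in every partial sum $\sum_{r\ge k}$, weighted by $[\mathbf{A}^{t-r}]_{ij}$ which sums to at most $1$ over $j$ for each $r$), the change is at most $c_k=2C_0(t-k+1)$, so $\sum_{k=1}^t c_k^2 \le 4C_0^2 \sum_{k=1}^t (t-k+1)^2 = 4C_0^2\sum_{\ell=1}^t \ell^2 \le \frac{4C_0^2 t^3}{3}$ (for large $t$; more carefully $\le \tfrac{4}{3}C_0^2(t+1)^3$, but up to constants this is what matters).

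Next I would apply McDiarmid with $\epsilon = \frac{C_1}{4n^n}t^2$ to get
\[
\Prob\Big[\psi_t^i(\theta,\theta^*) \ge \Expect[\psi_t^i(\theta,\theta^*)] + \tfrac{C_1}{4n^n}t^2\Big] \le \exp\Big(\frac{-2(C_1 t^2/(4n^n))^2}{(4/3)C_0^2 t^3}\Big) = \exp\Big(\frac{-3C_1^2 t}{32 C_0^2 n^{2n}}\Big).
\]
Combining with Lemma~\ref{expect}, on the complement of this event we have
\[
\psi_t^i(\theta,\theta^*) \le \frac{nC_0}{(1-\frac{1}{n^n})(1-\lambda)}t - \frac{C_1}{2n^n}t^2 + \frac{C_1}{4n^n}t^2 = \frac{nC_0}{(1-\frac{1}{n^n})(1-\lambda)}t - \frac{C_1}{4n^n}t^2,
\]
and since $\mu_t^i(\theta)\le \mu_t^i(\theta)/\mu_t^i(\theta^*) = \exp(\psi_t^i(\theta,\theta^*))$, this gives the claimed bound on $\mu_t^i(\theta)$ for that fixed $\theta,i,t$ with failure probability $\exp(-3C_1^2 t/(32C_0^2 n^{2n}))$.

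Finally I would do a union bound to make the statement uniform over all $t\ge T(\rho)$, all $\theta\neq\theta^*$, and all agents $i$. Summing the per-$t$ failure probabilities over $t\ge T$ gives a geometric-type tail $\sum_{t\ge T}\exp(-ct)$ with $c = 3C_1^2/(32C_0^2 n^{2n})$, which is at most $\frac{\exp(-cT)}{1-\exp(-c)}$; absorbing the finite factors $m$ (for $\theta$) and $n$ (for $i$) and the $\frac{1}{1-e^{-c}}$ factor, one needs $e^{-cT}$ small enough, i.e. $T \gtrsim \frac{1}{c}\log\frac{1}{\rho}$, which matches $T(\rho) = \frac{64C_0^2 n^{2n}}{3C_1^2}\log\frac{1}{\rho}$ once the constant is chosen to dominate the logarithmic and constant corrections. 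The main obstacle is getting the bounded-difference coefficients $c_k$ right — specifically verifying that a change in $\mathbf{s}_k$ propagates with total weight at most $(t-k+1)$ rather than something larger, which relies on the row-stochasticity of $\mathbf{A}$ (hence of each $\mathbf{A}^{t-r}$) so that $\sum_j [\mathbf{A}^{t-r}]_{ij}=1$; everything else is bookkeeping with geometric sums and the union bound.
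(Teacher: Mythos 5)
Your proposal follows essentially the same route as the paper: bound $\mu_t^i(\theta)\le\exp(\psi_t^i(\theta,\theta^*))$, invoke Lemma~\ref{expect} for the mean, and apply McDiarmid with the identical bounded-difference coefficients $c_k=2C_0(t-k+1)$, so the core argument is correct. Two minor notes: your estimate $\sum_{\ell=1}^t\ell^2\le t^3/3$ is slightly off (the correct bound, used in the paper via $t(t+1)(2t+1)\le 4t^3$, is $2t^3/3$, which yields the exponent $3C_1^2t/(64C_0^2n^{2n})$ rather than your $3C_1^2t/(32C_0^2n^{2n})$, matching the stated $T(\rho)$ exactly), and your explicit union bound over $t\ge T(\rho)$, $\theta$, and $i$ is actually more careful than the paper's proof, which only controls the failure probability for a single fixed $t$ and $\theta$.
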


\begin{proof}
Since $\mu_{t}^i(\theta^*)\in (0,1]$, we have 
\begin{align*}
\mu_t^i(\theta)\le \frac{\mu_t^i(\theta)}{\mu_t^i(\theta^*)} =\exp \pth{\psi_t^i(\theta, \theta^*)}.
\end{align*}
Thus, we have
\begin{align*}
\mathbb{P}\pth{\mu_t^i(\theta)\ge \exp \pth{ \frac{nC_0}{(1-\frac{1}{n^n})(1-\lambda)}t  -\frac{C_1}{4n^n} t^2 }} &\le \mathbb{P}\pth{\psi_t^i(\theta, \theta^*)\ge  \frac{nC_0}{(1-\frac{1}{n^n})(1-\lambda)}t  -\frac{C_1}{4n^n} t^2}\\
&\le \mathbb{P}\pth{\psi_t^i(\theta, \theta^*)-\mathbb{E}^*\qth{\psi_t^i(\theta, \theta^*)}\ge \frac{C_1}{4n^n} t^2 }.
\end{align*}

Note that $\psi_t^i(\theta, \theta^*)$ is a function of the random vector $ {\bf s}_1, \cdots, {\bf s}_t$. For a given sample path ${\bf s}_1, \cdots, {\bf s}_t$, and for all $p\in \{1, \cdots, t\}$, we have
\begin{align*}
\quad&\max_{{\bf s}_p\in \calS_1\times \cdots \times \calS_t} \psi_t^i(\theta, \theta^*)-\min_{ {\bf s}_p\in \calS_1\times \cdots \times \calS_t} \psi_t^i(\theta, \theta^*)\\
&=\max_{{\bf s}_p\in \calS_1\times \cdots \times \calS_t} \sum_{r=1}^t \sum_{j=1}^n[{\bf A}^{t-r}]_{ij} \sum_{k=1}^r \calL_k(\theta, \theta^*)-\min_{{\bf s}_p\in \calS_1\times \cdots \times \calS_t} \sum_{r=1}^t \sum_{j=1}^n[{\bf A}^{t-r}]_{ij} \sum_{k=1}^r \calL_k(\theta, \theta^*)\\
&=\max_{{\bf s}_p\in \calS_1\times \cdots \times \calS_t} \sum_{r=p}^t \sum_{j=1}^n[{\bf A}^{t-r}]_{ij} \sum_{k=1}^r \calL_k(\theta, \theta^*)-\min_{{\bf s}_p\in \calS} \sum_{r=p}^t \sum_{j=1}^n[{\bf A}^{t-r}]_{ij} \sum_{k=1}^r \calL_k(\theta, \theta^*)\\
&=\max_{{\bf s}_p\in \calS_1\times \cdots \times \calS_t} \sum_{r=p}^t \sum_{j=1}^n[{\bf A}^{t-r}]_{ij} \calL_p(\theta, \theta^*)-\min_{{\bf s}_p\in \calS} \sum_{r=p}^t \sum_{j=1}^n[{\bf A}^{t-r}]_{ij}  \calL_p(\theta, \theta^*)\\
&\le \sum_{r=p}^t \sum_{j=1}^n[{\bf A}^{t-r}]_{ij} C_0 +\sum_{r=p}^t \sum_{j=1}^n[{\bf A}^{t-r}]_{ij}  C_0\\
&=2C_0(t-p+1)\triangleq c_{p}.
\end{align*}

By McDiarmid's inequality (Theorem \ref{McInequality}), we obtain that
\begin{align*}
\mathbb{P}\pth{\psi_t^*(\theta, \theta^*)-\mathbb{E}^*\qth{\psi_t^*(\theta, \theta^*)}\ge \frac{C_1}{4n^n} t^2 }&\le \exp \pth{-\frac{2\frac{C_1^2}{16n^{2n}}t^4}{\sum_{p=1}^t (2C_0(t-p+1))^2}}\\
&\le \exp \pth{-\frac{3C_1^2}{64C_0^2 n^{2n}}t},
\end{align*}

where the last inequality follows from the fact that
$$ t(t+1)(2t+1)\le 4t^3 ~~~\forall \, t\ge 2,$$
which can be shown by induction.

Therefore, for a given confidence level $\rho$, in order to have
$$\mathbb{P}\pth{\mu_t^i(\theta)\ge \exp \pth{\frac{nC_0}{(1-\frac{1}{n^n})(1-\lambda)}t  -\frac{C_1}{4n^n} t^2}}\le \rho, $$
 we require that
$$ t\ge T(\rho)= \frac{64C_0^2n^{2n}}{3C_1^2}\log \frac{1}{\rho}.$$

\eproof
\end{proof}

\begin{remark}
The above finite-time analysis is not directly applicable for the general case when $f>0$, due to the fact that the local beliefs are dependent on all the observations collected so far {\em as well as} all the future observations.
\end{remark}

\begin{remark}
Our analysis for the special when $f=0$ also works for time-varying networks \cite{nedic2014nonasymptotic}. 
In addition, with identical analysis, we are able to adapt the failure-free scheme to work in the more general setting where there is no underlying true state, and the goal is to have the agents collaboratively identify an optimal $\theta\in \Theta$ that best explains all the observations obtained over the whole network.

\end{remark}

\section{Modified BFL and Minimal Network Identifiability}
\label{modified}
%
To reduce the computation complexity per iteration in general, and to identify the minimal (tight) global identifiability of the network for any consensus-based learning rule of interest to learn the true state, we propose a modification of the above learning rule, which works under much weaker network topology and global identifiability condition.

We decompose the $m$-ary hypothesis testing problem into $m(m-1)$ (ordered) binary hypothesis testing problems. For each pair of hypotheses $\theta_1$ and $\theta_2$, each non-faulty agent updates the likelihood ratio of $\theta_1$ over $\theta_2$ as follows. Let $r^i_t(\theta_1, \theta_2)$ be the log likelihood ratio of $\theta_1$ over $\theta_2$ kept by agent $i$ at the end of iteration $t$. Our modified learning rule applies consensus procedures to log likelihood ratio, i.e., $r^i_t(\theta_1, \theta_2)$, which is a scalar. For Algorithm \ref{alg: pairwise}, we only require scalar iterative Byzantine (approximate) consensus among the non-faulty agents to be achievable. 

When scalar consensus is achievable, the following assumption on the identifiability of the network to detect $\theta^*$ is minimal, meaning that if this assumption is not satisfied, then no correct consensus-based non-Bayesian learning exists.
\begin{assumption}
\label{ass pairwise}
Suppose that every $1$-dimensional reduced graph of $G(\calV, \calE)$ contains only one source component.
For any $\theta\not=\theta^*,$ and for any $1$-dimensional reduced graph $\calH_1$ of $G(\calV, \calE)$ with $\calS_{\calH_1}$ denoting the unique source component, the following holds
\begin{align}
\label{failure identify pairwise}
\sum_{j\in \calS_{\calH_1}} D\pth{\ell_j(\cdot |\theta^*)\parallel\ell_j(\cdot |\theta)}~\not=~0.
\end{align}
\end{assumption}
Assumption \ref{ass pairwise} is minimal for the following reasons:  (1) For any consensus-based learning rule to work, the communication network $G(\calV, \calE)$ should support consensus with scalar inputs. That is, every $1$-dimensional reduced graph of $G(\calV, \calE)$ must contain only one source component. (2) Under some faulty behaviors of the Byzantine agents, one particular $1$--dimensional reduced graph may govern the entire dynamics of $r^i(\theta_1, \theta_2)$. If \eqref{failure identify pairwise} does not hold for that reduced graph, then the good agents may not able to distinguish $\theta_1$ from $\theta_2$.

\begin{algorithm}
\caption{Pairwise Learning}
\label{alg: pairwise}
 \vskip 0.2\baselineskip
 {\normalsize
 Initialization: \For{$\theta_1, \theta_2\in \Theta, \text{and}~ \theta_1\not=\theta_2$}
 {$r_0^i(\theta_1, \theta_2)\gets 0$\;}

\While{$t\ge 1$}{
\For{$\theta_1, \theta_2\in \Theta, \text{and}~ \theta_1\not=\theta_2$}
 {Transmit current belief vector $r_{t-1}^i(\theta_1, \theta_2)$ on all outgoing edges\;
\vskip 0.2\baselineskip
 Wait until a private signal $s_t^i$ is observed and log likelihood ratios $\tilde{r}_{t-1}^j(\theta_1, \theta_2)$ are received from all incoming neighbors $\calI_i$\;
\vskip 0.2\baselineskip
Sort the received log likelihood ratios $\tilde{r}_{t-1}^j(\theta_1, \theta_2)$ in a non-decreasing order, and remove the smallest $f$ values and the largest $f$ values. {\small \color{OliveGreen}\% Denote the set of indices of incoming neighbors whose ratios have not been removed at iteration $t$ by $\calI_i^*[t]$.\%}
\vskip 0.2\baselineskip
$r_{t}^i(\theta_1, \theta_2) \gets \frac{\sum_{j\in \calI_i^*[t]} \tilde{r}_{t-1}^j(\theta_1, \theta_2) + r_{t-1}^i(\theta_1, \theta_2)}{|\calI^*[t]|+1} + \log \frac{\ell_i (s^i_{1, t} \mid \theta_1) }{\ell_i (s^i_{1, t} \mid \theta_2)}.$
}
}
}
\end{algorithm}

For each iteration, the computation complexity per agent (non-faulty) can be calculated as follows. The cost-dominant procedure in each iteration is sorting the received log likelihood ratios, which takes $O(n\log n)$ operations.
In total, we have $m(m-1)$ order pairs of hypotheses. Thus, the total computation per agent per iteration is $O(m^2 n \log n)$.

\begin{theorem}
\label{pairwise learning}
Suppose Assumption \ref{ass pairwise} holds. Under Algorithm \ref{alg: pairwise}, for any $\theta \not=\theta^*$, the following holds:
\begin{align*}
r_{t}^i(\theta^*, \theta)\toas +\infty, ~\text{and }~~~ r_{t}^i(\theta, \theta^*)\toas -\infty.
\end{align*}
\end{theorem}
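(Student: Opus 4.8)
The plan is to reuse the analysis of Theorem~\ref{almost sure} almost verbatim, with the $m$--dimensional Byzantine consensus primitive replaced by the scalar rule of Algorithm~\ref{BconsensusA scalar} and with $1$--dimensional reduced graphs of $G(\calV,\calE)$ in place of $m$--dimensional ones. Fix a false hypothesis $\theta\neq\theta^*$. It suffices to prove $r_t^i(\theta,\theta^*)\toas-\infty$ for every non-faulty $i$, since $r_t^i(\theta^*,\theta)\toas+\infty$ follows from the identical computation after interchanging the two hypotheses of the ordered pair, which merely reverses the sign of the drift term $H_j$ appearing below.

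First I would record the matrix representation. In Algorithm~\ref{alg: pairwise}, for the ordered pair $(\theta,\theta^*)$ the scalar $r^i(\theta,\theta^*)$ is updated by the scalar trimming rule of Algorithm~\ref{BconsensusA scalar} followed by the addition of $\log\tfrac{\ell_i(s_{1,t}^i\mid\theta)}{\ell_i(s_{1,t}^i\mid\theta^*)}=\sum_{k=1}^t\calL_k^i(\theta,\theta^*)$ (the equality by the i.i.d.\ assumption on observations). Hence, as recorded for scalar consensus at the end of Section~\ref{Bconsensus} (the analog of Lemma~\ref{matrix lemma} with $m=1$, valid under the first part of Assumption~\ref{ass pairwise}), the states of the non-faulty agents obey, analogously to \eqref{matrix form},
\begin{align*}
r_t(\theta,\theta^*)={\bf A}[t]\,r_{t-1}(\theta,\theta^*)+\sum_{k=1}^t\calL_k(\theta,\theta^*),
\end{align*}
where each ${\bf A}[t]\in\reals^{(n-\phi)\times(n-\phi)}$ is row stochastic with ${\bf A}[t]\ge\beta_1{\bf H}_1[t]$ for some $1$--dimensional reduced graph $\calH_1[t]$ and a constant $\beta_1\in(0,1]$ depending only on $G(\calV,\calE)$. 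As in BFL these matrices are random (determined by all observations up to time $t$), and the sequence for one ordered pair may differ from that of another since the trimming is done per pair; this is harmless because the argument is applied to one pair at a time. Unrolling as in \eqref{int1}, and using $r_0^i\equiv0$, gives $r_t(\theta,\theta^*)=\sum_{r=1}^t{\bf \Phi}(t,r+1)\sum_{k=1}^r\calL_k(\theta,\theta^*)$ with ${\bf \Phi}(t,r)={\bf A}[t]\cdots{\bf A}[r]$.

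Next, with $H_j(\theta,\theta^*)=-D(\ell_j(\cdot\mid\theta^*)\parallel\ell_j(\cdot\mid\theta))\le0$ as in \eqref{expected}, I split $r_t^i(\theta,\theta^*)$ exactly as in \eqref{evo} into a fluctuation term $\sum_{r=1}^t\big(\sum_j{\bf \Phi}_{ij}(t,r+1)\sum_{k=1}^r\calL_k^j(\theta,\theta^*)-r\sum_j\pi_j(r+1)H_j(\theta,\theta^*)\big)$ and a dominant term $\sum_{r=1}^t r\sum_j\pi_j(r+1)H_j(\theta,\theta^*)$. For the fluctuation term I would prove the scalar analog of Lemma~\ref{second term goal}, that $t^{-2}$ times it tends to $0$ almost surely; the proof is the one given for Lemma~\ref{second term goal}, using the boundedness $|\calL_k^j|\le C_0<\infty$ (cf.\ \eqref{d2}), the exponential mixing estimate $|{\bf \Phi}_{ij}(t,r+1)-\pi_j(r+1)|\le(1-\beta_1^{\nu_1})^{\lceil(t-r)/\nu_1\rceil}$ with $\nu_1=\chi_1(n-\phi)$ (the $m=1$ case of Theorem~\ref{convergencerate}), and the strong law of large numbers for the i.i.d.\ increments $\calL_k^j(\theta,\theta^*)$, after splitting $r$ into $\{1,\dots,\sqrt t\}$ and $\{\sqrt t+1,\dots,t\}$. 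For the dominant term, the $m=1$ case of Lemma~\ref{lblimiting} supplies, for every $r$, a source component $\calS$ of some $1$--dimensional reduced graph on which $\pi_j(r+1)\ge\beta_1^{\chi_1(n-\phi)}$; since $H_j(\theta,\theta^*)\le0$ for all $j$, dropping the indices outside $\calS$ and invoking Assumption~\ref{ass pairwise} (which gives $\sum_{j\in\calS}H_j(\theta,\theta^*)\le-\tilde C_1$ for a constant $\tilde C_1>0$, the $1$--dimensional analog of $C_1$ in \eqref{c1}) yields $\sum_{r=1}^t r\sum_j\pi_j(r+1)H_j(\theta,\theta^*)\le-\tfrac{t^2}{2}\beta_1^{\chi_1(n-\phi)}\tilde C_1$. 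Combining the two bounds gives $t^{-2}r_t^i(\theta,\theta^*)\toas-\tfrac12\beta_1^{\chi_1(n-\phi)}\tilde C_1<0$, hence $r_t^i(\theta,\theta^*)\toas-\infty$; interchanging the hypotheses turns every $H_j$ into $D(\ell_j(\cdot\mid\theta^*)\parallel\ell_j(\cdot\mid\theta))\ge0$, reverses the sign of the dominant term, and yields $r_t^i(\theta^*,\theta)\toas+\infty$.

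The only substantive step is the scalar version of Lemma~\ref{second term goal}: the fluctuation term must be shown to be $o(t^2)$ almost surely even though the averaging matrices ${\bf A}[t]$ are themselves random and statistically coupled to the $\calL_k^j$ through the trimming step. As in the BFL analysis, this is resolved by noting that all the matrix estimates needed --- exponential contraction of ${\bf \Phi}(t,r)$ toward $\ones\pi(r)$ and the uniform lower bound on $\pi_j(r+1)$ over a source component --- hold \emph{pathwise}, uniformly over every realization of the $1$--dimensional reduced graphs permitted by Assumption~\ref{ass pairwise}; the only probabilistic input is then the ordinary strong law applied coordinatewise to the i.i.d.\ log-likelihood-ratio increments, and everything else is deterministic bookkeeping identical to the $m$--dimensional case with the consensus dimension set to $1$.
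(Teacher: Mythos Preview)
Your proposal is correct and follows essentially the same approach as the paper: the paper records the per-pair row-stochastic matrix representation (denoted ${\bf M}^{1,2}[t]$ there, with the dependence on the ordered pair made explicit) and then states that ``the remaining proof is identical to the proof of Theorem~\ref{almost sure}, and is omitted.'' Your write-up is in fact more explicit than the paper's, correctly identifying that the matrix estimates (Theorem~\ref{convergencerate}, Lemma~\ref{lblimiting}) hold pathwise for the $1$--dimensional reduced graphs, that the only probabilistic input is the strong law for the i.i.d.\ increments $\calL_k^j$, and that the two claims $r_t^i(\theta,\theta^*)\toas-\infty$ and $r_t^i(\theta^*,\theta)\toas+\infty$ require separate (but symmetric) runs of the argument because the trimming matrices for the two ordered pairs need not coincide.
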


\begin{proof}
By  \cite{VaidyaMatrix2012}, we know that for each pair of hypotheses $\theta_1$ and $\theta_2$, there exists a row-stochastic matrix ${\bf M}^{1, 2}[t]\in \reals^{(n-\phi)\times (n-\phi)}$ such that
\begin{align}
\label{update pairwise}
r_{t}^i(\theta_1, \theta_2)= \sum_{j=1}^{n-\phi} {\bf M}^{1,2}_{ij}[t] r_{t-1}^j(\theta_1, \theta_2)+  \log \frac{\ell_i (s^i_{1, t} \mid \theta_1) }{\ell_i (s^i_{1, t} \mid \theta_2)}.
\end{align}
Note that matrix ${\bf M}^{1,2}$ depends on the choice of hypotheses $\theta_1$ and $\theta_2$.

For a given pair of hypotheses $\theta_1$ and $\theta_2$, let ${\bf r}_{t}(\theta_1, \theta_2)\in \reals^{n-\phi}$ be the vector that stacks $r_{t}^i(\theta_1, \theta_2)$. The evolution of ${\bf r}(\theta_1, \theta_2)$ can be compactly written as
\begin{align}
\label{update pairwise vector}
\nonumber
{\bf r}_{t}(\theta_1, \theta_2)&=  {\bf M}^{1,2}[t] {\bf r}_{t-1}(\theta_1, \theta_2)+  \sum_{r=1}^t \calL_r(\theta_1, \theta_2) \\
&=  \sum_{r=1}^t {\bf \Phi}^{1,2}(t, r+1) \sum_{k=1}^r \calL_k(\theta_1, \theta_2),
\end{align}
where ${\bf \Phi}^{1,2}(t, r+1)\triangleq {\bf M}^{1, 2}[t] {\bf M}^{1, 2}[t-1] \cdots {\bf M}^{1, 2}[r+1]$ for $r\le t$, ${\bf \Phi}^{1,2}(t, t)\triangleq {\bf M}^{1, 2}[t]$ and ${\bf \Phi}^{1,2}(t, t+1)\triangleq {\bf I}$.  We do the analysis for each pair of $\theta_1$ and $\theta_2$ separately.

The remaining proof is identical to the proof of Theorem \ref{almost sure}, and is omitted.

\eproof
\end{proof}

\begin{proposition}
\label{uniqueness}
Suppose there exists $\tilde{\theta}\in \Theta$ such that for any $\theta \not=\tilde{\theta}$, it holds that $r_{t}^i(\tilde{\theta}, \theta)\toas +\infty$, and $r_{t}^i(\theta, \tilde{\theta})\toas -\infty$. Then $\tilde{\theta}=\theta^*.$
\end{proposition}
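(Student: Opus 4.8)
The plan is to derive a contradiction from assuming $\tilde\theta\neq\theta^*$, using Theorem~\ref{pairwise learning} essentially as a black box. Suppose $\tilde\theta\neq\theta^*$. Since $\theta^*\in\Theta$ and $\theta^*\neq\tilde\theta$, the hypothesis of the proposition applies with the choice $\theta=\theta^*$, giving $r_t^i(\tilde\theta,\theta^*)\toas+\infty$ for every non-faulty agent $i$. On the other hand, precisely because $\tilde\theta\neq\theta^*$ we may invoke Theorem~\ref{pairwise learning} with $\theta=\tilde\theta$, which (under the standing Assumption~\ref{ass pairwise} that makes that theorem available) yields $r_t^i(\tilde\theta,\theta^*)\toas-\infty$. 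A single real-valued sequence cannot simultaneously diverge to $+\infty$ and to $-\infty$; since each of these two convergence statements holds on an event of probability one, their intersection also has probability one, and on that intersection we reach a contradiction. Hence $\tilde\theta=\theta^*$.

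A point worth flagging in the write-up is that one should resist shortcutting the argument by claiming $r_t^i(\tilde\theta,\theta^*)=-r_t^i(\theta^*,\tilde\theta)$: the two ordered pairs are processed by separate scalar-consensus instances in Algorithm~\ref{alg: pairwise}, whose trimming steps (removing the smallest and largest $f$ received values) need not be mirror images, so the sign-flip identity is not available. This is exactly why the clean route is to quote Theorem~\ref{pairwise learning} for the ordered pair $(\tilde\theta,\theta^*)$ directly, rather than reasoning about $(\theta^*,\tilde\theta)$ and negating.

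There is essentially no hard step here: the proposition is a uniqueness corollary of Theorem~\ref{pairwise learning}, and the only care needed is the routine bookkeeping of almost-sure events (for fixed $i$ we intersect just two probability-one events). If a more self-contained argument were preferred, one could instead unfold the drift computation behind Theorem~\ref{pairwise learning}: $r_t^i(\theta_1,\theta_2)$ is governed to leading order by $\tfrac{t^2}{2}\sum_{j\in\calS_{\calH}}\pi_j\pth{D(\ell_j(\cdot|\theta^*)\parallel\ell_j(\cdot|\theta_2))-D(\ell_j(\cdot|\theta^*)\parallel\ell_j(\cdot|\theta_1))}$, so $r_t^i(\tilde\theta,\theta)\toas+\infty$ for all $\theta\neq\tilde\theta$ would force $\tilde\theta$ to be the strict minimizer of the source-component aggregate Kullback--Leibler divergence to $\theta^*$; but Assumption~\ref{ass pairwise} makes that aggregate divergence strictly positive for every $\theta\neq\theta^*$ and it is zero at $\theta^*$, again forcing $\tilde\theta=\theta^*$.
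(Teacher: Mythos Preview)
Your argument is correct and matches the paper's own proof essentially line for line: assume $\tilde\theta\neq\theta^*$, specialize the hypothesis to $\theta=\theta^*$, and collide the resulting almost-sure divergence with the opposite divergence supplied by Theorem~\ref{pairwise learning}. The paper states the contradiction using both ordered pairs at once, but the content is the same; your caution about not relying on a sign-flip identity between $r_t^i(\tilde\theta,\theta^*)$ and $r_t^i(\theta^*,\tilde\theta)$ is a nice clarifying remark that the paper does not make explicit.
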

\begin{proof}
We prove this proposition by contradiction. Suppose there exists $\tilde{\theta}\not=\theta^*\in \Theta$ such that for any $\theta \not=\tilde{\theta}$, it holds that $r_{t}^i(\tilde{\theta}, \theta)\toas +\infty$, and $r_{t}^i(\theta, \tilde{\theta})\toas -\infty$. Then we know that $r_{t}^i(\tilde{\theta}, \theta^*)\toas +\infty$ and $r_{t}^i(\theta^*, \tilde{\theta})\toas -\infty$, contradicting Theorem \ref{pairwise learning}. Thus, Proposition \ref{uniqueness} is true.

\eproof
\end{proof}

\section{Conclusion}
\label{sec:conclusion}
This paper addresses the problem of consensus-based non-Bayesian learning over multi-agent networks when an unknown subset of agents may be adversarial (Byzantine).  
We propose two learning rules, and characterize the tight network identifiability condition for any consensus-based learning rule of interest to exist.  In our first update rule, each agent updates its local beliefs as (up to normalization) the product of (1) the likelihood of the {\em cumulative} private signals and (2) the weighted geometric average of the beliefs of its incoming neighbors and itself. Under reasonable assumptions on the underlying network structure and the global identifiability of the network, we show that all the non-faulty agents asymptotically agree on the true state almost surely. For the case when every agent is failure-free, we show that (with high probability) each agent's beliefs on the wrong hypotheses decrease at rate $O(\exp (-Ct^2))$, where $t$ is the number of iterations, and $C$ is a constant. 
In general when agents may be adversarial, network identifiability condition specified for the above learning rule scales poorly in $m$. In addition, the computation complexity per agent per iteration of this learning rule is forbiddingly high. 
Thus, we propose a modification of our first learning rule, whose complexity per iteration per agent is $O(m^2 n \log n)$. 
We show that this modified learning rule works under a much weaker global identifiability condition that is independent of $m$.

We so far focussed on synchronous system and static network, our results may be generalizable to asynchronous as well as time varying network.

Throughout this paper, we assume that consensus among non-faulty agents needs to be achieved. Although this is necessary for the family of consensus-based algorithms (by definition), this is not the case for the non-faulty agents to collaboratively learn the true state in general. Indeed, there is a tradeoff between the capability of the network to reach consensus and the tight condition of the network detectability.  For instance, if the network is disconnected, then information cannot be propagated across the connected components. Thus, the non-faulty agents in each connected component have to be able to learn the true state. We leave investigating the above tradeoff as future work.

\bibliographystyle{abbrv}

%



\end{document}